\documentclass[12pt,draftcls,journal,onecolumn]{IEEEtran}

\usepackage{amssymb,amsthm, amsmath,latexsym}
\usepackage{graphicx}
\usepackage{mathrsfs}
\usepackage{amsfonts}
\usepackage{amssymb}
\usepackage{longtable}
\usepackage{amsmath}
\usepackage{setspace}
\usepackage{caption}
\usepackage[figuresright]{rotating}
\usepackage[misc]{ifsym}
\usepackage{bbm}
\usepackage{makecell}
\usepackage{arydshln}
\usepackage{supertabular}
\usepackage{booktabs}
\usepackage{color}

\newtheorem{theorem}{Theorem}
\newtheorem{lemma}[theorem]{Lemma}
\newtheorem{remark}[theorem]{Remark}
\newtheorem{proposition}[theorem]{Proposition}

\newtheorem{example}[theorem]{Example}

\newcommand{\ord}{{\mathrm{ord}}}

\newcommand{\gf}{{\mathrm{GF}}}

\newcommand{\F}{{\mathbb{F}}}

\newcommand{\C}{{\mathcal{C}}}

\usepackage{blindtext}

\ifCLASSINFOpdf

\else

\fi

\begin{document}
\title{Solutions to Three Conjectures and an Open Problem on Binary BCH Codes
\thanks{
}
}
\author{Xiaoqiang Wang, Jiawei He$^*$, Boru Yi, Dabin Zheng}

\renewcommand{\thefootnote}{\empty}
\footnotetext{\thanks{*Corresponding author. }
\newline \indent Xiaoqiang Wang, Boru Yi, and Dabin Zheng are with the Hubei Key Laboratory of Applied Mathematics, Faculty of Mathematics and Statistics, Hubei University, Wuhan 430062, China (E-mail:  waxiqq@163.com; yibrru@163.com; dzheng@hubu.edu.cn).
\newline \indent Jiawei He is with the School of Mathematics and Information Science, Nanchang Hangkong University, Nanchang 330036, China (E-mail: hjwywh@mails.ccnu.edu.cn).}

\maketitle

\begin{abstract}

 BCH codes are among the most important classes of cyclic codes and have played a central role in coding theory and its applications.
One of the fundamental problems in the study of BCH codes is to determine their exact minimum distances, which directly govern their error-correcting capability.
Although the BCH bound provides a general lower bound, determining the exact minimum distance is often difficult, and many parameter families remain unresolved. In this paper, we investigate three conjectures and an open problem on binary BCH codes proposed by Chen, Xie, and Ding in \cite{Chen59}. We settle these conjectures on the exact minimum distances of three families of binary BCH codes by constructing codewords attaining the BCH bound. Beyond these conjectures, we further study a more challenging family of codes
and determine its minimum distance for some cases. In addition, we study Open Problem 8.4 in \cite{Chen59}: affirmative answers are obtained for the first two length families, while for the third family a sufficient condition is established and a counterexample shows that the unrestricted assertion does not hold in general.
\end{abstract}

\textbf{MSC 2020} \ \ 94B05; 94B15; 11T71

\textbf{Keywords} \ \  Binary BCH code; cyclic code; minimum distance; optimal code.

%
\IEEEpeerreviewmaketitle

\section{Introduction}

Let $q$ be a prime power and let $\mathbb{F}_q$ denote the finite field with $q$ elements. An
$[n,k,d]$ linear code $\mathcal{C}$ over $\mathbb{F}_q$ is a $k$-dimensional subspace of
$\mathbb{F}_q^n$ with minimum Hamming distance $d$. A linear code $\mathcal{C}$ is called cyclic if
$
(c_0,c_1,\ldots,c_{n-1})\in \mathcal{C}
$
implies
$
(c_{n-1},c_0,\ldots,c_{n-2})\in \mathcal{C}.
$
By identifying a vector $(c_0,c_1,\ldots,c_{n-1})$ with the polynomial
\[
c_0+c_1x+\cdots+c_{n-1}x^{n-1},
\]
a cyclic code of length $n$ over $\mathbb{F}_q$ can be viewed as an ideal of
$\mathbb{F}_q[x]/\langle x^n-1\rangle$.

Assume that $\gcd(n,q)=1$, and let $\beta$ be a primitive $n$-th root of unity over a suitable
extension field of $\mathbb{F}_q$. For an integer $i$, let $m_i(x)$ denote the minimal polynomial
of $\beta^i$ over $\mathbb{F}_q$. For integers $\delta$ and $b$ with $2\leq \delta\leq n$, the BCH
code $\mathcal{C}_{(q,n,\delta,b)}$ is the cyclic code with generator polynomial
\[
g_{(\delta,b)}(x)
=
\operatorname{lcm}
\bigl(
m_b(x),m_{b+1}(x),\ldots,m_{b+\delta-2}(x)
\bigr).
\]
By the BCH bound,
$d\bigl(\mathcal{C}_{(q,n,\delta,b)}\bigr)\geq \delta.$ When $b=1$, the code is called a narrow-sense BCH code. If the length of the code is \( q^m - 1 \) or \( q^m + 1 \),
then $\mathcal{C}_{(q,n,\delta,b)}$ is called a primitive BCH code, or antiprimitive BCH code, respectively.

BCH codes constitute one of the most important classes of cyclic codes and
have been widely used in reliable data transmission and storage. Binary BCH
codes were introduced independently by Bose and Ray-Chaudhuri and by
Hocquenghem \cite{Bose62,Hocquenghem59}, and were subsequently generalized
to arbitrary finite fields by Gorenstein and Zierler
\cite{Gorenstein61}. Since then, the parameters and structural properties of
BCH codes have been extensively investigated.
Among the various classes of BCH codes, primitive narrow-sense BCH codes
of length $q^m-1$ have received the most attention. Many results on their
dimensions, Bose distances, and minimum distances have been obtained; see,
for example,
\cite{Aly07,Augot94,Charpin90,Ding15,Ding17,Liu17,Lid17,Yue15,Dianwu96}.
BCH codes of other special lengths have also attracted considerable
interest. In particular, antiprimitive BCH codes of length $q^m+1$ have
been studied in \cite{Lid017,Liu17,Li2019,Yan2018}, while BCH codes with
lengths such as
$
\frac{q^m-1}{q-1}
$ and
$\frac{q^m-1}{2}
$
have been investigated in
\cite{Li2017,Ling23,Yan2018,Zhu19}.
These studies show that the arithmetic structure of the code length plays
an important role in determining the cyclotomic cosets, defining sets, and
parameters of BCH codes.

Despite this progress, determining the exact minimum distance of a BCH code remains difficult.
The defining zeros usually provide an effective lower bound through the BCH bound, but proving
that this bound is attained requires the construction of a codeword of the corresponding weight.
Consequently, even when the defining set and dimension of a BCH code are known explicitly, its
exact minimum distance may still remain undetermined.

Recently, Chen, Xie, and Ding \cite{Chen59} studied three families of binary BCH codes with
lengths
\[
n=(2^{2s}+1)(2^s-1),\,\,
n=2^{2s}+2^s+1,\text{ and }
n=\frac{4^s-1}{3},
\]
respectively, where $s\geq 2$ is a positive integer. For these families, several parameters were determined, while the exact minimum
distances of certain narrow-sense BCH codes remained open. In particular, the following three
conjectures were proposed in \cite{Chen59}:
\[
d\bigl(\mathcal{C}_{(2,(2^{2s}+1)(2^s-1),5,1)}\bigr)=5,
\,\,
d\bigl(\mathcal{C}_{(2,2^{2s}+2^s+1,3,1)}\bigr)=3,
\text{ and }
d\left(\mathcal{C}_{\left(2,\frac{4^s-1}{3},5,1\right)}\right)=5.
\]

The main purpose of this paper is to settle these three conjectures. The key point is to construct
codewords whose Hamming weights attain the corresponding BCH lower bounds. Although the three
families have different arithmetic structures, the proofs are based on a common principle: suitable
elements in finite field subgroups are chosen so that the resulting codeword satisfies the prescribed
zero conditions. Depending on the family, this is achieved by using roots of unity, irreducible
polynomials over finite fields, or a lifting argument from primitive BCH codes.
Beyond these conjectures, we further investigate the code $\mathcal{C}_{(2,2^{2s}+2^s+1,5,1)},$ whose minimum distance problem is more involved than that of $\mathcal{C}_{(2,2^{2s}+2^s+1,3,1)}$. Using explicit constructions over finite fields and polynomials, we determine the minimum distance of the code for several infinite families of the parameter $s$.

We further investigate Open Problem~8.4 proposed in \cite{Chen59}, which asks whether there exist
$\delta$ and $b$ such that
\[
\dim\bigl(\mathcal{C}_{(2,n,\delta,b)}\bigr)\geq \frac{n-1}{2}
\,\,
\text{and}\,\,
d\bigl(\mathcal{C}_{(2,n,\delta,b)}\bigr)\geq \frac{\sqrt n}{2}
\] for $n$ being three different values.
For the length $n=(2^{2s}+1)(2^s-1)$ and $n=2^{2s}+2^s+1$, we give affirmative answers by combining a cyclotomic coset
estimate with the BCH bound. For $n=(2^s-1)/\lambda$, where $\lambda>1$ is a given divisor of $2^s-1$, we establish a sufficient condition for the existence of BCH codes with these parameters. We also provide a counterexample demonstrating that the unrestricted assertion fails for some admissible pair $(s,\lambda)$.

The remainder of this paper is organized as follows. Section~II recalls some basic facts on cyclic
codes, BCH codes, cyclotomic cosets, and minimum-distance bounds. Section~III settles the three
minimum-distance conjectures and further studies the minimum distance of
$\mathcal{C}_{(2,2^{2s}+2^s+1,5,1)}$. Section~IV investigates Open Problem~8.4. Finally,
Section~V concludes the paper.

\section{Preliminaries}
Let $q$ be a prime power and let $\mathbb{F}_q$ denote the finite field with $q$ elements. Unless otherwise stated, all codes considered below are
binary.
\subsection{ Cyclotomic cosets and factorization of \(x^n-1\)}
Let $n$ be a positive integer with $\gcd(n,q)=1$, and put
\[
\mathbb{Z}_n=\{0,1,\ldots,n-1\}.
\]
For $i\in\mathbb{Z}_n$, the $q$-cyclotomic coset modulo $n$
containing $i$ is defined by
\[
C_i=\{i,iq,iq^2,\ldots,iq^{\ell_i-1}\}\pmod n,
\]
where $\ell_i$ is the smallest positive integer such that
$
iq^{\ell_i}\equiv i\pmod n.
$
The integer $\ell_i$ is the cardinality of $C_i$, and the smallest
integer in $C_i$ is called its coset leader. The distinct
$q$-cyclotomic cosets form a partition of $\mathbb{Z}_n$.

Let
$
m=\operatorname{ord}_n(q),
$
the multiplicative order of $q$ modulo $n$. Then
$
|C_i|\, \leq\, m
$
for every $i\in\mathbb{Z}_n$.
Let $\alpha$ be a primitive element of $\mathbb{F}_{q^m}$ and set
$
\beta=\alpha^{(q^m-1)/n}.
$
Then $\beta$ is a primitive $n$-th root of unity. For each
$i\in\mathbb{Z}_n$, the minimal polynomial of $\beta^i$ over
$\mathbb{F}_q$ is
\[
m_i(x)=\prod_{j\in C_i}(x-\beta^j).
\]

Let $\mathcal{C}$ be a cyclic code of length $n$ over $\mathbb{F}_q$ with
generator polynomial $g(x)$. The defining set of $\mathcal{C}$ with respect to
$\beta$ is
\[
T(\mathcal{C})=\{i\in\mathbb{Z}_n:g(\beta^i)=0\}.
\]
Since $T(\mathcal{C})$ is a union of $q$-cyclotomic cosets,
$
\deg g(x)=|T(\mathcal{C})|
$
and consequently
$
\dim(\mathcal{C})=n-|T(\mathcal{C})|.
$
In the binary case, we have the useful relation
$
C_{2i}=C_i.
$
This identity will be used repeatedly in the sequel. The following lemma was proved in \cite{Aly07}.
 \begin{lemma}\cite{Aly07}\label{ewnd}
Let \( n \) be a positive integer such that \( q^{\lfloor m/2 \rfloor} < n \leq q^m - 1 \), where \( m = \operatorname{ord}_n(q) \).
Then the \( q \)-cyclotomic coset \( C_s = \{sq^j \bmod n : 0 \leq j \leq m-1\} \) has cardinality \( m \) for all \( s \) in the range \( 1 \leq s \leq nq^{\lfloor m/2 \rfloor}/(q^m - 1) \). In addition, every \( s \) with \( s \not\equiv 0 \pmod{q} \) in this range is a coset leader.
\end{lemma}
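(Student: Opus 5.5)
The plan is to prove both assertions by a direct size argument on residues. It exploits the fact that for small $s$ the multiples $sq^j$ cannot wrap around modulo $n$ in a way that produces a repetition or a smaller element too early. Write $N=q^m-1$, so that $n\mid N$ because $m=\operatorname{ord}_n(q)$. Put $h=\lfloor m/2\rfloor$, and fix $s$ with $1\le s\le nq^{h}/N$. It is convenient to lift to the primitive length: set $\mu=N/n$, and note that the map $i\mapsto \mu i$ embeds $\mathbb{Z}_n$ into $\mathbb{Z}_N$ compatibly with multiplication by $q$. Hence $C_s$ modulo $n$ corresponds to the $q$-cyclotomic coset of $\mu s$ modulo $N$, and $\mu s\le q^{h}$. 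Modulo $N=q^m-1$, multiplying by $q^j$ is a cyclic shift of the $m$-digit base-$q$ expansion, which makes both claims questions about digit strings of $t=\mu s$.

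\textbf{Cardinality.} Suppose $|C_s|=\ell<m$. Then $\ell\mid m$, so $\ell\le m/2$, and $tq^{\ell}\equiv t \pmod N$. This means the base-$q$ expansion of $t$ (padded to $m$ digits) is periodic with period $\ell$. Since $t\le q^{h}$ and $t\ge 1$, the nonzero digits of $t$ all lie among positions $0,\dots,h$, and it is exactly $q^h$ only in one case. In that case $q^{h}$ has a single nonzero digit, which cannot be $\ell$-periodic for $\ell<m$ unless $m=1$, a trivial case. Otherwise, $t<q^h$ has its top $m-h\ge m/2\ge \ell$ digits equal to zero. Periodicity with period $\ell$ then forces a full period block to be zero, hence all digits to be zero, so $t=0$, a contradiction. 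Pulling back via $\mu$ gives $|C_s|=m$.

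\textbf{Coset leader.} We must show that $sq^j \bmod n\ge s$ for every $j$, equivalently that every cyclic shift of the digit string of $t$ is at least $t$ as an integer (shifts of a multiple of $\mu$ stay multiples of $\mu$, and the order is preserved by division by $\mu$). The hypothesis $s\not\equiv 0\pmod q$ controls the low digits. The argument would proceed as follows: $t$ occupies the lowest $h+1$ positions, and its highest zero run has length at least $m-h-1$. A cyclic shift moving a nonzero digit of $t$ into one of the top $m-h$ positions yields a number at least $q^{h}\ge t$, with equality only in degenerate cases. A shift keeping all nonzero digits in the low block would have to be a pure left multiplication of $t$ without wrap, which only increases it. I expect this step to be the main obstacle. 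The condition $s\not\equiv 0\pmod q$ must be translated correctly through the factor $\mu$, since $\mu$ may be divisible by $q$-adic structure only in a limited way; in fact $\gcd(\mu,q)=1$ because $\mu\mid q^m-1$. A careful case split is needed for the boundary value $t=q^h$ and for shifts that wrap the lowest nonzero digit. The analysis should follow Aly--Klappenecker--Sarvepalli, comparing $sq^j=an+r$ and showing that $r<s$ would force $s>nq^h/N$.
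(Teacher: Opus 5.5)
The paper does not prove this lemma; it imports it from \cite{Aly07}, so your argument has to stand on its own. The lift $t=\mu s$ with $\mu=(q^m-1)/n$ is the right frame; it is also how Aly et al.\ pass to the primitive length. Your cardinality half is complete: $\ell\mid m$ with $\ell<m$ gives $\ell\le m/2\le m-h$, so an $\ell$-periodic string whose top $m-h$ digits vanish is zero, and $t=q^h$ is treated separately.

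The coset-leader half, however, is only an outline, and its key sentence is false without an ingredient you never supply. A rotation that keeps all nonzero digits of $t$ in the low block need \emph{not} be a pure multiplication without wraparound. Take $q=2$, $n=63$ (so $m=6$, $h=3$, $\mu=1$) and $s=t=2$. Rotating by $j=5$ keeps the single nonzero digit in the low block, yet $2\cdot 2^5\equiv 1\pmod{63}$. What rules this out is exactly $s\not\equiv 0\pmod q$, which your sketch mentions but never uses at this point. Your fallback, that $r<s$ forces $s>nq^h/N$, fails for the same reason: $s=q$, $j=m-1$ gives $r=1$ whenever $q$ lies in the range.

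The missing step is short.
\begin{itemize}
\item Since $\mu\mid q^m-1$, we have $\gcd(\mu,q)=1$, so $q\nmid s$ gives $q\nmid t$. Hence the units digit $t_0$ of $t$ is nonzero.
\item For $m\ge 2$ (so $h\ge 1$) this also excludes $t=q^h$, so $t<q^h$.
\item Let $1\le j\le m-1$. If $j\ge h$, the rotated string carries $t_0\neq 0$ in position $j$, so $tq^j\bmod N\ge q^j\ge q^h>t$.
\item If $j<h$, then $tq^j<q^{h+j}\le q^{2h-1}\le q^{m-1}<N$. So no reduction occurs and $tq^j\bmod N=tq^j>t$.
\end{itemize}
Hence $t$ is the least element of its coset modulo $N$. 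That coset is exactly $\mu\cdot C_s$, and multiplication by $\mu$ preserves order, so $s=\min C_s$. The case $m=1$ is trivial, since the range then contains at most $s=1$. With this, the case split you anticipate for $t=q^h$ and for wrapping the lowest digit disappears.
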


\subsection{BCH codes and the BCH bound}

Assume that $\gcd(n,q)=1$, and let $\beta$ be a primitive
$n$-th root of unity over a suitable extension field of $\mathbb{F}_q$.
For each integer $i$, let $m_i(x)$ denote the minimal polynomial of
$\beta^i$ over $\mathbb{F}_q$.

The following are some elementary facts that will be used in the subsequent proof.

\begin{lemma}\label{lem:rootcriterion}
Let $n$ be a positive integer and let $\beta$
be a primitive $n$-th root of unity. Let \(S\) be a subset of \(\{0,1,\ldots,n-1\}\), and define
\[
c(X)=\sum_{j\in S}X^j\in\mathbb F_2[X].
\]
Then the following hold.
\begin{enumerate}
\item If $c(\beta)=0$, then
$
c(X)\in \mathcal{C}_{(2,n,3,1)}.
$

\item If
$
c(\beta)=c(\beta^3)=0,
$
then
$
c(X)\in \mathcal{C}_{(2,n,5,1)}.
$
\end{enumerate}
\end{lemma}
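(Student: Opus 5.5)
The plan is to read off membership from the generator polynomials, using two standard facts. First, the Frobenius map $X\mapsto X^2$ is additive on binary polynomials, so $c(X^2)=c(X)^2$. Second, a binary polynomial lies in the cyclic code with generator $g$ exactly when $g$ divides it modulo $X^n-1$. Since $g$ divides $X^n-1$, this amounts to $g\mid c$. Because $X^n-1$ is separable when $n$ is odd, $g$ is a product of distinct minimal polynomials, so $g\mid c$ holds precisely when $c$ vanishes at every root of $g$.

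For part (1), $\mathcal{C}_{(2,n,3,1)}$ has generator $\operatorname{lcm}(m_1(x),m_2(x))$. The roots of $m_1$ are the conjugates $\beta^{2^j}$, and $c(\beta^{2^j})=c(\beta)^{2^j}=0$. So $m_1\mid c$, since $m_1$ is the minimal polynomial of $\beta$ and $c(\beta)=0$. Since $C_2=C_1$, we have $m_2=m_1$, hence the lcm equals $m_1$ and $c(X)$ lies in the code.

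For part (2), the generator is $\operatorname{lcm}(m_1,m_2,m_3,m_4)$. The relation $C_{2i}=C_i$ gives $m_2=m_4=m_1$, so the generator reduces to $\operatorname{lcm}(m_1,m_3)$. From $c(\beta)=0$ and $c(\beta^3)=0$ we get $m_1\mid c$ and $m_3\mid c$. Then $\operatorname{lcm}(m_1,m_3)$ divides $c$, because it divides any common multiple, and membership follows. If $m_1=m_3$ the argument is the same. There is no genuine obstacle here. The only point needing care is to use the minimal-polynomial property ($m_i$ divides any polynomial vanishing at $\beta^i$) rather than arguing root by root, which makes separability unnecessary.
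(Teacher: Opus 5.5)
Your argument is correct. Since $C_2=C_4=C_1$, the generator polynomials reduce to $m_1(x)$ and $\operatorname{lcm}(m_1(x),m_3(x))$, and the minimal-polynomial property gives $m_1\mid c$ and $m_3\mid c$. Because the generator divides $X^n-1$ and $\deg c<n$, divisibility is exactly code membership.

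The paper states this lemma without proof as an elementary fact, and your proposal supplies the standard justification it takes for granted. Your separability and Frobenius remarks are harmless but not needed.
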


\begin{lemma}\cite{DingP}
 When \(\delta=3\), the code \(\mathcal C_{(2,2^m-1,3,1)}\) is the binary Hamming code with parameters
  \[
  [2^m-1,\,2^m-1-m,\,3],
  \]
  and its generator polynomial is \(m_1(x)\), where \(m\ge 3\).
  \end{lemma}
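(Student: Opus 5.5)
The claim has two parts. First, $\mathcal C_{(2,2^m-1,3,1)}$ has generator polynomial $m_1(x)$. Second, it has parameters $[2^m-1,\,2^m-1-m,\,3]$ and is the Hamming code.

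\emph{Generator polynomial.} By definition, $g_{(3,1)}(x)=\operatorname{lcm}(m_1(x),m_2(x))$. Since $q=2$, we have $C_2=C_1$, so $\beta^2$ is a conjugate of $\beta$ and $m_2(x)=m_1(x)$. Hence $g_{(3,1)}(x)=m_1(x)$.

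\emph{Dimension.} Here $n=2^m-1$, so $\operatorname{ord}_n(2)=m$ and $\beta$ is a primitive element of $\mathbb F_{2^m}$. The coset $C_1=\{1,2,\ldots,2^{m-1}\}$ modulo $n$ has exactly $m$ elements, because $2^j\not\equiv 1 \pmod{2^m-1}$ for $0<j<m$. (Alternatively, apply Lemma~\ref{ewnd} with $s=1$.) Thus $\deg m_1=m$, and the dimension is $n-|T(\mathcal C)|=2^m-1-m$.

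\emph{Minimum distance, lower bound.} The BCH bound gives $d\ge 3$. One can also see this directly. A weight-1 word $X^i$ would need $\beta^i=0$, which is impossible. A weight-2 word $X^i+X^j$ with $i\ne j$ in $\mathbb Z_n$ would need $\beta^i=\beta^j$, which is impossible since $\beta$ has order $n$.

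\emph{Minimum distance, upper bound.} This is the only step that needs a construction. I would use Lemma~\ref{lem:rootcriterion}(1): it suffices to find distinct $i,j,k\in\mathbb Z_n$ with $\beta^i+\beta^j+\beta^k=0$. Take $i=0$ and $j=1$. Since $m\ge 3$, the element $1+\beta$ is nonzero, and it is not $1$ or $\beta$. So $1+\beta=\beta^k$ for some $k\in\mathbb Z_n\setminus\{0,1\}$, and $c(X)=1+X+X^k$ is a codeword of weight $3$.

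\emph{Hamming code.} Finally, the parity-check matrix of the code has as columns $\beta^0,\ldots,\beta^{n-1}$, written as vectors in $\mathbb F_2^m$. These are exactly all the nonzero vectors of $\mathbb F_2^m$, which is the defining property of the binary Hamming code.

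The main obstacle is essentially nil. The only point needing care is that $1+\beta\notin\{0,1,\beta\}$, so that the exponent $k$ is distinct from $0$ and $1$.
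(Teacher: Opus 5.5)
Your proof is correct. The paper does not prove this lemma; it only quotes it from \cite{DingP}. So there is no internal proof to compare with, and your argument is the standard self-contained one:
\begin{itemize}
\item $C_2=C_1$ gives $g_{(3,1)}=m_1$;
\item $|C_1|=m$ gives the dimension $2^m-1-m$;
\item the BCH bound gives $d\ge 3$;
\item the word $1+X+X^k$ with $\beta^k=1+\beta$ gives $d\le 3$.
\end{itemize}

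Two presentational fixes:
\begin{itemize}
\item Write the coset as $C_1=\{1,2,2^2,\ldots,2^{m-1}\}$. As written, $\{1,2,\ldots,2^{m-1}\}$ reads as an interval of integers.
\item The reason $1+\beta\neq 0$ is simply that $\beta\neq 1$, since $n>1$. The hypothesis $m\ge 3$ is not what is needed at that step.
\end{itemize}

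One connection to the paper is worth noticing. Your weight-$3$ construction is the primitive-length special case of the one in Theorem~\ref{thm:conj59}. There, for $n=2^{2s}+2^s+1$, the authors need an irreducible cubic $f_a$ and a norm computation to produce $\xi$ with both $\xi$ and $\xi+1$ in the subgroup of $n$-th roots of unity. For $n=2^m-1$, every nonzero element of $\mathbb F_{2^m}$ is a power of $\beta$, so $\xi=\beta$ works with no extra condition.

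Once you identify the code with the Hamming code via the parity-check matrix with columns $\beta^0,\ldots,\beta^{n-1}$, the distance also follows directly from that matrix. Its columns are distinct and nonzero, and $\beta^0+\beta^1+\beta^k=0$ gives three dependent columns. Your separate distance arguments are therefore correct but partly redundant.
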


\begin{lemma}\cite{DingP}\label{eq:5pri}
When \(\delta=5\), the code \(\mathcal C_{(2,2^m-1,5,1)}\) has parameters
  \[
  [2^m-1,\,2^m-1-2m,\,5],
  \]
  and its generator polynomial is \(m_1(x)m_3(x)\), where \(m\ge 4\).
\end{lemma}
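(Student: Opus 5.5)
The statement just above is Lemma~\ref{eq:5pri}, a classical fact cited from \cite{DingP}: for $m\ge 4$ the code $\mathcal C_{(2,2^m-1,5,1)}$ has generator polynomial $m_1(x)m_3(x)$ and parameters $[2^m-1,\,2^m-1-2m,\,5]$. The proof splits into three parts: the generator polynomial and dimension, the lower bound $d\ge 5$, and an explicit codeword of weight $5$.

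\textbf{Generator polynomial and dimension.} Put $n=2^m-1$. By definition $g_{(5,1)}(x)=\operatorname{lcm}(m_1,m_2,m_3,m_4)$. In the binary case $C_2=C_4=C_1$, so $m_2=m_4=m_1$. The cosets $C_1=\{2^j\}$ and $C_3=\{3\cdot 2^j\}$ are distinct, because $3$ is not a power of $2$ modulo $2^m-1$ when $m\ge 3$. Hence $g=m_1m_3$. Both cosets have size $m$. For $C_1$ this holds because $\ord_n(2)=m$. For $C_3$, if $3\cdot 2^j\equiv 3 \pmod{2^m-1}$ with $0<j<m$, then $2^m-1$ divides $3(2^j-1)<3(2^m-1)$. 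This forces $3(2^j-1)=2^m-1$ or $2(2^m-1)$, which fails by a size and parity check once $m\ge 4$. Alternatively, Lemma~\ref{ewnd} gives $|C_3|=m$ directly, since $3\le 2^{\lfloor m/2\rfloor}$ for $m\ge 4$. Therefore $\dim=n-2m$.

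\textbf{Lower bound.} The defining set contains $1,2,3,4$, four consecutive integers, so the BCH bound gives $d\ge 5$.

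\textbf{A weight-$5$ codeword.} By Lemma~\ref{lem:rootcriterion}(2), it suffices to find five distinct elements $x_1,\dots,x_5\in\F_{2^m}^*$ with $\sum x_i=0$ and $\sum x_i^3=0$. This construction is the main step. I would set $x_5=x_1+x_2+x_3+x_4$ and require
\[
(x_1+x_2+x_3+x_4)^3=x_1^3+x_2^3+x_3^3+x_4^3 .
\]
In characteristic $2$ the difference of the two sides factors as $(x_1+x_2)(x_1+x_3)(x_1+x_4)+(x_2+x_3)(x_2+x_4)(x_3+x_4)$, though it is easier to work with the symmetric form directly. A simpler route takes $x_1,x_2,x_3,x_4$ forming a $2$-dimensional affine configuration. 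Let $a,b,c$ be distinct and nonzero, and use the points $\{a,b,c,a+b+c\}$ plus a fifth point $e$. Their sum is $e$, which would have to be $0$, so this route fails.

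Instead I would find the codeword by a counting argument. The weight-$3$ and weight-$4$ codewords are absent, since $d\ge5$. The number of solutions of $\sum_{i=1}^4 x_i=0,\ \sum_{i=1}^4 x_i^3=y$ in nonzero distinct elements, as $y$ ranges, can be computed. The standard fact is that the minimum distance of the double-error-correcting BCH code is exactly $5$, since its covering radius is $3$. This means every syndrome $(s_1,s_3)$ is reached by a set of at most $3$ columns. Take $s=(x+y,\,x^3+y^3)$ for some distinct $x,y$ such that $s$ is not a syndrome of any single or double error. Such an $s$ exists by counting, because there are only $n+\binom n2<2^{2m}$ syndromes of weight at most $2$. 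Then $s$ is the syndrome of some $3$-set $\{z_1,z_2,z_3\}$. The set $\{x,y,z_1,z_2,z_3\}$ then gives a codeword of weight $5$: it is nonzero with weight at most $5$, so its weight is exactly $5$ by $d\ge5$.

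The delicate point is justifying covering radius $3$. For this I would invoke the classical result that $x^3+s_1x^2+\cdots$ type cubic equations have roots, and cite \cite{DingP} for it. If that citation is unavailable, the fallback is a direct count of weight-$5$ words via the MacWilliams identity, using the known weight distribution of the dual code.
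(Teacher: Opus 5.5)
The paper does not prove this lemma; it quotes it from \cite{DingP}. Your first two steps are correct. The facts $C_2=C_4=C_1$, $C_3\neq C_1$ and $|C_3|=m$ give $g=m_1m_3$ and dimension $2^m-1-2m$, and the BCH bound gives $d\ge 5$.

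The gap is in the only nontrivial step, the weight-$5$ codeword. The syndrome $s=(x+y,\,x^3+y^3)$ is by definition the syndrome of the weight-$2$ error at the positions of $x$ and $y$. So requiring that $s$ not be the syndrome of any single or double error is self-contradictory. The count $n+\binom n2<2^{2m}$ only produces syndromes outside the weight-$\le 2$ classes, which is the opposite of what you need. What you actually need is a weight-$2$ syndrome that is also a sum of three columns. Covering radius $3$ says nothing about that; it does not even force $d=5$, since the Golay code has covering radius $3$ and $d=7$.

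Counting cannot rescue the argument either. Distinct $3$-sets with equal syndromes are disjoint, since otherwise there is a nonzero codeword of weight $\le 4$. So each of the $2^{2m-1}+2^{m-1}-1$ syndromes outside the weight-$\le 2$ classes can absorb up to $\lfloor n/3\rfloor$ three-sets, and this total capacity exceeds $\binom n3$. Separately, your factorization is wrong: the product expression contains the monomial $x_1^3$, while $(x_1+\cdots+x_4)^3-\sum_i x_i^3$ does not. You do not use it, though. The MacWilliams fallback would work, but only by importing the known weight distribution of the dual code, which is just another citation.

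Your first line of attack can be completed in a self-contained way. Let $\operatorname{Tr}$ be the absolute trace of $\F_{2^m}$, and look for elements with $x_1+x_2=1$, $x_1x_2=A$, $x_3+x_4=r$, $x_3x_4=r^2b$ and $x_5=1+r$.

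\begin{itemize}
\item Then $\sum x_i=0$ holds automatically. Since $x_1^3+x_2^3=1+A$ and $x_3^3+x_4^3=r^3+r^3b$, the condition $\sum x_i^3=0$ becomes $A=r^3b+r^2+r$.
\item The quadratics $X^2+X+A$ and $X^2+rX+r^2b$ split over $\F_{2^m}$ if and only if $\operatorname{Tr}(A)=0$ and $\operatorname{Tr}(b)=0$. Because $\operatorname{Tr}(r^2)=\operatorname{Tr}(r)$, we have $\operatorname{Tr}(A)=\operatorname{Tr}(r^3b)$.
\item Pick $r\neq 0$ with $r^3\neq 1$. Then $b\mapsto\operatorname{Tr}(b)$ and $b\mapsto\operatorname{Tr}(r^3b)$ are independent functionals, so exactly $2^{m-2}$ values of $b$ satisfy both trace conditions.
\item For $m\ge 4$, at least one such $b$ avoids both $0$ and $(r+1)/r^2$. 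This makes all five $x_i$ nonzero. The hypothesis $m\ge 4$ enters precisely here, through $2^{m-2}\ge 3$.
\item Distinctness is then automatic. Cancelling equal pairs would leave one or three distinct nonzero elements with vanishing first and third power sums. That is impossible, since $a^3+b^3+(a+b)^3=ab(a+b)$.
\end{itemize}

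Lemma~\ref{lem:rootcriterion} then yields a codeword of weight $5$.
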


The BCH bound will be used throughout the paper.

\begin{lemma}[BCH bound]\label{lem:BCHbound}
Let $\mathcal{C}$ be a cyclic code of length $n$ over $\mathbb{F}_q$. If the
defining set of $\mathcal{C}$ contains
\[
b,b+1,\ldots,b+\delta-2,
\]
then
$
d(\C)\ge\delta.
$
\end{lemma}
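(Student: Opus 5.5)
The statement to prove is the BCH bound (Lemma~\ref{lem:BCHbound}). I will use the classical Vandermonde argument.

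Let $c(x)=\sum_{i=0}^{n-1}c_ix^i$ be a nonzero codeword of $\mathcal{C}$. Since $b,b+1,\ldots,b+\delta-2$ lie in the defining set, the generator polynomial vanishes at $\beta^{b},\ldots,\beta^{b+\delta-2}$. Hence every codeword does too:
\[
c(\beta^{b+k})=0,\qquad 0\le k\le \delta-2 .
\]
Suppose, for contradiction, that $c$ has weight $w\le \delta-1$. Let its support be $\{i_1,\ldots,i_w\}$ and write $x_j=\beta^{i_j}$. These are distinct elements of the extension field, because the $i_j$ are distinct modulo $n$ and $\beta$ has order exactly $n$.

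Keep only the first $w$ of the vanishing conditions, $k=0,\ldots,w-1$. This is allowed because $w-1\le\delta-2$. They form a homogeneous linear system $M\mathbf{v}=0$ in the unknowns $v_j=c_{i_j}$, where
\[
M=\bigl(x_j^{\,b+k}\bigr)_{0\le k\le w-1,\;1\le j\le w}.
\]
Factor $x_j^{\,b}$ out of column $j$. What remains is the Vandermonde matrix $(x_j^{\,k})$, so
\[
\det M=\Bigl(\prod_{j=1}^{w} x_j^{\,b}\Bigr)\prod_{1\le j<l\le w}(x_l-x_j).
\]
This determinant is nonzero, since each $x_j\neq0$ and the $x_j$ are pairwise distinct. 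So the only solution is $\mathbf{v}=0$. That contradicts $c_{i_j}\neq 0$ for every index in the support. Therefore every nonzero codeword has weight at least $\delta$, i.e. $d(\mathcal{C})\ge\delta$.

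There is no real obstacle here. The points needing care are: taking exactly $w$ equations, which uses the hypothesis $w\le\delta-1$; the distinctness of the $x_j$; and computing the determinant by factoring out the powers $x_j^{\,b}$, which also covers the case $b\ne 0$.
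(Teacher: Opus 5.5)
Your argument is correct: it is the classical Vandermonde proof, and each step is sound, namely that codewords inherit the zeros of $g$ because $c(x)\equiv a(x)g(x) \pmod{x^n-1}$ and $\beta^n=1$, that exactly $w\le\delta-1$ equations are available, that $x_j^{\,b}$ can be factored out of each column, and that the $x_j=\beta^{i_j}$ are distinct and nonzero. The paper itself gives no proof of this lemma --- it quotes the BCH bound as a standard fact --- so your proposal supplies the textbook argument that the paper takes for granted.
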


\section{Three minimum distance conjectures for binary BCH codes}

In \cite{Chen59}, the minimum distances of three families of binary BCH codes were not completely determined, and three conjectures were proposed concerning their exact values. In this section, we settle these three conjectures by determining the minimum distances of the corresponding BCH codes.

\subsection{The family $n=(2^{2s}+1)(2^s-1)$}

In this subsection, we settle Conjecture 4.8 in \cite{Chen59}.  We start with the following lemma.

\begin{lemma}\label{lem:1}
Let $q=2^{4s}$ and $n=(2^{2s}+1)(2^s-1)$, where $s$ is a positive integer.
Then there exist five pairwise distinct elements
$
x,y,z,u,v\in \mathbb{F}_{q}
$ such that
\begin{eqnarray*}
\begin{cases}
x+y+z+u+v=0,  \\
x^3+y^3+z^3+u^3+v^3=0,\\
x^n=y^n=z^n=u^n=v^n=1.
\end{cases}
\end{eqnarray*}
\end{lemma}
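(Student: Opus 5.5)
The plan is to find the five elements inside a subgroup of $\mathbb{F}_q^*$ that is contained in the group $G=\{x\in\mathbb{F}_q^*: x^n=1\}$. Since $q-1=2^{4s}-1=(2^{2s}+1)(2^s+1)(2^s-1)$, the group $G$ is cyclic of order $n$. It therefore contains the unique subgroup of order $d$ for every divisor $d$ of $n$. In particular it contains $\mathbb{F}_{2^s}^*$, because $2^s-1\mid n$, and it contains the $5$-th roots of unity whenever $5\mid n$. I would split into cases according to $s$.

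\emph{Case $s\ge 4$.} I would reduce to primitive BCH codes via Lemma~\ref{eq:5pri}. With $m=s\ge 4$, the code $\mathcal{C}_{(2,2^s-1,5,1)}$ has minimum distance exactly $5$, so it has a codeword $c(X)=\sum_{j\in S}X^j$ with $|S|=5$. Let $\gamma$ be a primitive element of $\mathbb{F}_{2^s}$. Since the generator polynomial is $m_1(x)m_3(x)$, we have $c(\gamma)=c(\gamma^3)=0$. The five elements $\gamma^j$, $j\in S$, are pairwise distinct nonzero elements of $\mathbb{F}_{2^s}\subseteq\mathbb{F}_{2^{4s}}$. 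They satisfy $x+y+z+u+v=0$ and $x^3+y^3+z^3+u^3+v^3=0$. Their orders divide $2^s-1$, which divides $n$, so each has $x^n=1$.

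\emph{Case $s$ odd.} Here $2^{2s}=4^s\equiv(-1)^s=-1\pmod 5$, so $5\mid 2^{2s}+1\mid n$. Let $\zeta$ be a primitive $5$-th root of unity in $G$, and take the five elements $1,\zeta,\zeta^2,\zeta^3,\zeta^4$. Their sum is $(\zeta^5-1)/(\zeta-1)=0$. Since $\gcd(3,5)=1$, cubing permutes these five roots, so the sum of cubes is also $0$. They are distinct and lie in $G$. This covers $s=1,3$ together with all larger odd $s$.

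\emph{Remaining case $s=2$.} Here $n=17\cdot 3=51$ and $q=256$, and neither argument applies: $5\nmid 51$, and $m=2<4$. This is the main obstacle. I would settle it by an explicit search in $\mathbb{F}_{2^8}$ and exhibit five distinct $51$-st roots of unity $\beta^{j}$ with vanishing sum and cube-sum, where $\beta=\alpha^5$ for a primitive $\alpha$. Equivalently, I would exhibit a weight-$5$ word of length $51$ annihilated by $\beta$ and $\beta^3$.

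To make this search structured, I would first try to combine the factors $\mathbb{F}_4^*$ and the order-$17$ subgroup $U$. For example, I would look for elements $\omega^i u_i$ with $\omega\in\mathbb{F}_4^*$ and $u_i\in U$, using that the cube of such an element lies in $U$. This reduces the conditions to a small system over $U$. If no clean pattern emerges, I would simply state an explicit $5$-subset of $\mathbb{Z}_{51}$, found by computer, and verify the two sums directly. Since $G$ has only $51$ elements, the check is a finite computation.
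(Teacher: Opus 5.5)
\textbf{The gap is the case $s=2$.} Your argument is correct for every $s\neq 2$, but the case $s=2$ (so $n=51$) is left unproved. You correctly isolate it as the obstacle, and then only describe a search (``I would settle it by an explicit search'', ``found by computer'') without exhibiting the $5$-subset.

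This is not a formality. Five distinct $51$st roots of unity with vanishing sum and cube-sum exist if and only if $d(\mathcal{C}_{(2,51,5,1)})=5$. Analogous statements do fail for small lengths: the paper records $d(\mathcal{C}_{(2,73,5,1)})=6$ and $d(\mathcal{C}_{(2,7,5,1)})=7$. So at $s=2$ the lemma has no proof until a witness is written down and checked. The splitting $G\cong\mathbb{F}_4^*\times U$ is fine, but it does not by itself produce one.

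The paper supplies the witness. Let $\theta$ be a root of $h(X)=X^8+X^7+X^6+X^5+X^4+X+1$, and take $1,\theta^3,\theta^{21},\theta^{33},\theta^{44}$. The polynomial $h$ divides $X^{51}-1$, and its roots have order exactly $51$; the paper checks $X^{51}\equiv 1 \pmod h$ and rules out orders $1,3,17$. Modulo $h$ one finds
$X^{21}\equiv X^7+X^6+X^4+X^3+X^2$, $X^{33}\equiv X^5+X^4+X^3$, and $X^{44}\equiv X^7+X^6+X^5+X^3+X^2+1$, so $1+X^3+X^{21}+X^{33}+X^{44}\equiv 0$. The cubed exponents are $0,9,12,48,30$ modulo $51$. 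The reductions $X^{9}\equiv X^4+X^2+1$, $X^{12}\equiv X^7+X^5+X^3$, $X^{30}\equiv X^2+X+1$ and $X^{48}\equiv X^7+X^5+X^4+X^3+X+1$ give $1+X^9+X^{12}+X^{30}+X^{48}\equiv 0$ as well.

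\textbf{Comparison with the paper for the other values of $s$.} Here your route genuinely differs. The paper splits on $s \bmod 4$:
\begin{itemize}
\item for $s\not\equiv 2\pmod 4$ it uses fifth roots of unity (when $s\equiv 0\pmod 4$, via $5\mid 2^s-1$);
\item for $s\equiv 2\pmod 4$ it uses the divisibility $51=3\cdot 17\mid n$ together with the single order-$51$ witness above.
\end{itemize}
The second item disposes of that whole residue class at once, $s=2$ included, with no external input.

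You use fifth roots only for odd $s$. For every $s\ge 4$ you lift a weight-$5$ codeword of the primitive double-error-correcting BCH code of length $2^s-1$ through $2^s-1\mid n$; this is the device the paper itself uses in Case~3 of Lemma~\ref{lm:Con3}. That step is valid and needs no arithmetic on $n$ beyond $2^s-1\mid n$. However, it depends on the cited fact in Lemma~\ref{eq:5pri} and necessarily leaves the base case $s=2$, where $2^s-1=3$ is too short. With the $n=51$ witness inserted, your proof is complete.
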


\begin{proof}
Let
\begin{equation}\label{eq:sa}
H=\{a\in \mathbb F_q^*: a^n=1\}.
\end{equation}
Since
$
n=(2^{2s}+1)(2^s-1)
$
divides
\[
2^{4s}-1=(2^{2s}-1)(2^{2s}+1)
=(2^s-1)(2^s+1)(2^{2s}+1),
\]
the subgroup \(H\) is the unique subgroup of \(\mathbb F_q^*\) of order \(n\).
We distinguish two cases according to the residue class of \(s\) modulo \(4\).

\noindent
\textbf{Case 1:} $s\not\equiv 2\pmod 4$.
We first show that \(5\mid n\).
The multiplicative order of \(2\) modulo \(5\) is \(4\).
If \(s\equiv 0\pmod 4\), then
$
2^s\equiv 1\pmod 5,
$
so \(5\mid 2^s-1\).
If \(s\equiv 1\) or \(3\pmod 4\), then \(s\) is odd and
$
2^{2s}\equiv -1\pmod 5,
$
so \(5\mid 2^{2s}+1\).
Thus in all cases with \(s\not\equiv 2\pmod 4\), we have \(5\mid n\), and therefore \(H\) contains an element \(\omega\) of order \(5\).

Take
\[
x=1,\quad y=\omega,\quad z=\omega^2,\quad u=\omega^3,\quad v=\omega^4.
\]
These five elements are pairwise distinct, and since their orders divide \(n\), we have
\[
x^n=y^n=z^n=u^n=v^n=1.
\]
Moreover, because \(1+\omega+\omega^2+\omega^3+\omega^4=0\), we have
\[
x+y+z+u+v=0.
\]
Finally, multiplication by \(3\) permutes the nonzero residues modulo \(5\), so
\[
x^3+y^3+z^3+u^3+v^3
=1+\omega^3+\omega^6+\omega^9+\omega^{12}
=1+\omega+\omega^2+\omega^3+\omega^4
=0.
\]

\medskip

\noindent
\textbf{Case 2:} $s\equiv2\pmod4$. Write \(s=4m+2\) for some integer \(m\ge 0\).
Since \(s\) is even, we have
$
2^s\equiv 1\pmod 3,
$
so \(3\mid 2^s-1\).
Also, because \(2^4\equiv -1\pmod{17}\) and \(2s=8m+4\equiv 4\pmod 8\), we obtain
\[
2^{2s}\equiv 2^4\equiv -1\pmod{17},
\]
so \(17\mid 2^{2s}+1\).
Consequently,
$
51=3\cdot 17 \mid n.
$
Thus \(H\) contains an element of order \(51\), where $H$ is given in (\ref{eq:sa}).
Consider the polynomial
\[
h(X)=X^8+X^7+X^6+X^5+X^4+X+1 \in \mathbb F_2[X].
\]
Let \(\alpha\) be any root of \(h(X)\). Since
$
h(X)\mid X^{51}-1,
$
we have \(\alpha^{51}=1\) in the algebraic closure of \(\mathbb F_2\). Hence, we obtain
$
\operatorname{ord}(\alpha)\mid 51.
$
As \(51=3\cdot 17\), the positive divisors of \(51\) are
$
1,\;3,\;17,\;51.
$
Thus
\[
\operatorname{ord}(\alpha)\in\{1,3,17,51\}.
\]
To prove \(\operatorname{ord}(\alpha)=51\), it remains to exclude the cases \(1,3,17\).
If \(\operatorname{ord}(\alpha)=1\), then \(\alpha=1\), so \(X-1\mid h(X)\). But \(h(1)\neq 0\), hence \(X-1\nmid h(X)\), a contradiction.
If \(\operatorname{ord}(\alpha)=3\), then \(\alpha\neq 1\) and \(\alpha^3=1\). Thus \(\alpha\) is a root of \(X^3-1\) but not of \(X-1\), so \(\alpha\) is a root of
$
X^2+X+1.
$
Since
$
X^3-1=(X+1)(X^2+X+1),
$
a direct computation gives
\[
\gcd(h(X),X^3-1)=1.
\]
Therefore, no root of \(h(X)\) has order \(3\).
If \(\operatorname{ord}(\alpha)=17\), then \(\alpha\neq 1\) and \(\alpha^{17}=1\), so \(\alpha\) is a root of
$
X^{17}-1.
$

It is clear that
\[
X^{17}-1=(X+1)\Phi_{17}(X),
\]
where
$
\Phi_{17}(X)=X^{16}+X^{15}+\cdots+X+1.
$
Since
$
\operatorname{ord}_{17}(2)=8,
$
the polynomial \(\Phi_{17}(X)\) factors over \(\mathbb F_2\) into two irreducible polynomials of degree \(8\). Hence, every irreducible factor of \(X^{17}-1\) has degree \(1\) or \(8\).
A direct computation gives
\[
X^{17}\not\equiv 1\pmod{h(X)},
\]
Also \(h(1)\neq 0\), so \(X+1\nmid h(X)\). If \(\gcd(h(X),X^{17}-1)\) were nontrivial, it would contain an irreducible factor of \(X^{17}-1\) of degree \(1\) or \(8\). The degree \(1\) case is excluded by \(h(1)\neq 0\). In the degree \(8\) case, such a factor would divide \(h(X)\); since \(\deg h(X)=8\), \(h(X)\) would equal that factor up to a nonzero scalar, and therefore \(h(X)\mid X^{17}-1\), contradicting the computation above. Thus,
\[
\gcd(h(X),X^{17}-1)=1.
\]
Consequently, no root of \(h(X)\) has order \(17\).
Hence, every root of \(h(X)\) has multiplicative order \(51\).
Choose \(\theta\in H\) satisfying \(h(\theta)=0\).
Now set
\[
x=1,\quad y=\theta^3,\quad z=\theta^{21},\quad u=\theta^{33},\quad v=\theta^{44}.
\]
The exponents \(0,3,21,33,44\) are pairwise distinct modulo \(51\), so the five elements are pairwise distinct. Since \(51\mid n\), we have
\[
x^n=y^n=z^n=u^n=v^n=1.
\]
Furthermore, one verifies that
$
h(X)\mid 1+X^3+X^{21}+X^{33}+X^{44}.
$
Evaluating at \(X=\theta\), we get
\[
1+\theta^3+\theta^{21}+\theta^{33}+\theta^{44}=0,
\]
that is,
$
x+y+z+u+v=0.
$
Similarly,
$
h(X)\mid 1+X^9+X^{12}+X^{48}+X^{30}.
$
Since \(\theta^{51}=1\), we have
\[
\begin{aligned}
x^3+y^3+z^3+u^3+v^3
=1+\theta^9+\theta^{63}+\theta^{99}+\theta^{132}=1+\theta^9+\theta^{12}+\theta^{48}+\theta^{30}=0.
\end{aligned}
\]

Therefore, in both cases, there exist five pairwise distinct elements
$x,y,z,u,v\in\mathbb{F}_{q}$ satisfying all the required conditions.
\end{proof}

\begin{theorem}\label{Theoremc1}
Let $n=(2^{2s}+1)(2^s-1).$ Then $\mathcal{C}_{(2,n,5,1)}$ has parameters $[n, n-8s, 5]$ for $s\geq2$.
\end{theorem}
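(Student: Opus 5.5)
The proof has three parts: the dimension, the lower bound $d\ge 5$, and the upper bound $d\le 5$ via an explicit weight-5 codeword.

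\emph{Dimension.} Since $n$ divides $2^{4s}-1$, we have $\operatorname{ord}_n(2)\mid 4s$. Because $n>2^{3s-1}$ (indeed $n\ge (2^{2s}+1)\cdot 2^{s-1}$), the order cannot be a proper divisor of $4s$ smaller than $3s$. A proper divisor of $4s$ is at most $2s$, and $2^{2s}-1<n$, so $\operatorname{ord}_n(2)=4s$; write $m=4s$. The defining set of $\mathcal{C}_{(2,n,5,1)}$ is $C_1\cup C_2\cup C_3\cup C_4=C_1\cup C_3$. We have $q^{\lfloor m/2\rfloor}=2^{2s}<n$. Moreover,
\[
\frac{n\,2^{2s}}{2^{4s}-1}=\frac{2^{2s}}{2^s+1}>3
\]
for $s\ge 2$; for $s=2$ the value is $16/5>3$. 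Lemma~\ref{ewnd} therefore shows that $1$ and $3$ are distinct coset leaders with $|C_1|=|C_3|=4s$. Hence $|T|=8s$ and $\dim=n-8s$.

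\emph{Lower bound.} The defining set contains $1,2,3,4$, so Lemma~\ref{lem:BCHbound} gives $d\ge 5$.

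\emph{Upper bound.} This is the only nontrivial step, and Lemma~\ref{lem:1} supplies it. Take $\beta$ to be a generator of the subgroup $H$ of order $n$ in $\mathbb F_{2^{4s}}^*$. Lemma~\ref{lem:1} gives pairwise distinct $x,y,z,u,v\in H$ satisfying the two power-sum equations. Write $x=\beta^{j_1},\dots,v=\beta^{j_5}$ with distinct $j_i\in\{0,\dots,n-1\}$, and set
\[
c(X)=X^{j_1}+\cdots+X^{j_5}.
\]
Then $c(\beta)=x+y+z+u+v=0$ and $c(\beta^3)=x^3+\cdots+v^3=0$. By Lemma~\ref{lem:rootcriterion}(2), $c\in\mathcal{C}_{(2,n,5,1)}$, and $c$ has weight exactly $5$ because the exponents are distinct. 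Thus $d=5$.

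The main obstacle is guaranteeing that $1$ and $3$ lie in different cyclotomic cosets, each of full size $4s$, for all $s\ge 2$. Lemma~\ref{ewnd} handles this once its range condition is verified. The existence of the weight-5 codeword is already secured by Lemma~\ref{lem:1}.
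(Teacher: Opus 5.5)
Your proof is correct and follows the paper's argument: the dimension comes from Lemma~\ref{ewnd}, $d\ge 5$ from the BCH bound, and a weight-$5$ codeword is built from the five elements of Lemma~\ref{lem:1} via Lemma~\ref{lem:rootcriterion}. Your explicit checks that $\operatorname{ord}_n(2)=4s$ and that $3\le 2^{2s}/(2^s+1)$ supply the hypotheses of Lemma~\ref{ewnd}, which the paper's proof invokes without verifying.
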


\begin{proof}
The dimension of $\mathcal{C}_{(2,n,5,1)}$ can be obtained from Lemma \ref{ewnd}. We show the minimum distance of $\mathcal{C}_{(2,n,5,1)}$ in the following.

By Lemma \ref{lem:BCHbound}, the minimum distance of the binary BCH code \(\mathcal C_{(2,n,5,1)}\) with designed distance \(5\) satisfies
\begin{equation}\label{eq:928}
d(\mathcal C_{(2,n,5,1))}\ge 5.
\end{equation}
To prove equality, it suffices to exhibit a codeword of weight exactly \(5\).
Lemma \ref{lem:1} provides five pairwise distinct elements
$x,y,z,u,v\in \mathbb F_{2^{4s}}$
such that
\[
x+y+z+u+v=0,\qquad
x^3+y^3+z^3+u^3+v^3=0,
\]
and
\[
x^n=y^n=z^n=u^n=v^n=1,
\]
where \(n=(2^{2s}+1)(2^s-1)\).
Since \(n\mid 2^{4s}-1\), these five elements are all \(n\)-th roots of unity in $\mathbb F_{2^{4s}}$. Hence there exist pairwise distinct integers \(a_1,\dots,a_5\) such that
\[
x=\beta^{a_1},\quad y=\beta^{a_2},\quad z=\beta^{a_3},\quad u=\beta^{a_4},\quad v=\beta^{a_5},
\]
where \(\beta\) is a primitive \(n\)-th root of unity. Consider the polynomial
\[
c(X)=X^{a_1}+X^{a_2}+X^{a_3}+X^{a_4}+X^{a_5},
\]
its coefficients are all equal to \(1\), so its Hamming weight is \(5\).
By the conditions in Lemma \ref{lem:1},
\[
c(\beta)=x+y+z+u+v=0
\]
and
\[
c(\beta^3)=x^3+y^3+z^3+u^3+v^3=0.
\]
Applying Lemma \ref{lem:rootcriterion}, the conditions \(c(\beta)=c(\beta^3)=0\) imply that \(c(X)\) is a codeword of \(\mathcal C_{(2,n,5,1)}\).
Thus, we have constructed a codeword of weight \(5\), which yields the upper bound
\[
d(\mathcal C_{(2,n,5,1)})\le 5.
\]
Combining (\ref{eq:928}), we conclude that
$
d(\mathcal C_{(2,n,5,1)})=5.
$
\end{proof}

\begin{remark}
In \cite{Chen59}, the equality
$
d\bigl(\mathcal{C}_{(2,n,5,1)}\bigr)=5
$
was established for $s$ odd and for $s\equiv 0 \pmod 4$, whereas the case \(s \equiv 2 \pmod{4}\) remained open, as formulated in Conjecture 4.8 of \cite{Chen59}. In contrast, Theorem~\ref{Theoremc1} provides a unified argument that covers all possible cases of $s$, and hence completely confirms Conjecture~4.8 in \cite{Chen59}.
\end{remark}

\begin{example}
Let \( s = 2 \). Then \( \mathcal{C}_{(2,n,5,1)} \) has parameters \([51,35,5]\) and is an optimal binary cyclic code \cite[Appendix A]{DingP1}.
\end{example}

\begin{example}
Let \( s = 3 \). Then \( \mathcal{C}_{(2,n,5,1)} \) has parameters \([455,431,5]\).
\end{example}

\subsection{The family $n=2^{2s}+2^s+1$}

We now settle Conjecture 5.3 of \cite{Chen59}.
Let
$q=2^s$ and
$n=q^2+q+1=\frac{q^3-1}{q-1}.$
The subgroup
$
H=\{x\in\F_{q^3}^*:x^n=1\}
$
is precisely the norm-one subgroup
\[
H=\{x\in\F_{q^3}^*:N_{\F_{q^3}/\F_q}(x)=1\}.
\]
The following elementary lemma is the key point. This result may be known, but we could not find a specific reference, so we reproved it.

\begin{lemma}\label{lem:irreduciblecubic}
For every $q=2^s$ with $s\ge2$, there exists $a\in\F_q$ such that the polynomial

\[
f_a(X)=X^3+aX^2+(a+1)X+1
\]
is irreducible over $\F_q$.
\end{lemma}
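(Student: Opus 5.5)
The plan is to turn irreducibility into a root-counting statement. Over a field, a cubic is reducible exactly when it has a linear factor, that is, a root in the field. So $f_a$ is irreducible over $\F_q$ if and only if $f_a(x)\neq 0$ for every $x\in\F_q$. I will show that, as $a$ ranges over $\F_q$, the values of $a$ for which $f_a$ has a root in $\F_q$ cannot exhaust $\F_q$.

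\textbf{Step 1: separate the dependence on $a$.} Write
\[
f_a(X)=(X^3+X+1)+a(X^2+X).
\]
For $x=0$ we get $f_a(0)=1\neq 0$. For $x=1$ we get $f_a(1)=1+a+(a+1)+1=1\neq 0$, since we are in characteristic $2$. Hence the only possible roots of $f_a$ in $\F_q$ lie in $\F_q\setminus\{0,1\}$. For such $x$ we have $x^2+x\neq 0$, so $f_a(x)=0$ holds if and only if
\[
a=\varphi(x):=\frac{x^3+x+1}{x^2+x}.
\]

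\textbf{Step 2: count.} By Step 1, $f_a$ is reducible over $\F_q$ exactly when $a$ lies in the image of the map $\varphi:\F_q\setminus\{0,1\}\to\F_q$. This domain has $q-2$ elements, so the image has at most $q-2$ elements. Therefore at least two values $a\in\F_q$ are not of the form $\varphi(x)$. For any such $a$, the cubic $f_a$ has no root in $\F_q$ and is therefore irreducible.

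There is no real obstacle here. The only points needing care are the characteristic-$2$ evaluation $f_a(1)=1$ and the remark that a cubic without roots is irreducible. The argument does not even use $s\ge2$; for $q=2$ it gives $X^3+X+1$ or $X^3+X^2+1$.
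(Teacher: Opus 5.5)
Your proposal is correct and follows essentially the same argument as the paper: you check that $f_a(0)=f_a(1)=1$, solve $f_a(t)=0$ as $a=(t^3+t+1)/(t^2+t)$ for $t\notin\{0,1\}$, and note that the image of this map from a set of size $q-2$ cannot cover $\F_q$. Your extra observations, that at least two such $a$ exist and that the argument also covers $q=2$, are also correct.
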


\begin{proof}  Since \(\operatorname{char}\mathbb F_q=2\), we first observe that for every \(a\in\mathbb F_q\),
$
f_a(0)=1,
$
and
$
f_a(1)=1+a+(a+1)+1=1.
$
Thus, neither \(0\) nor \(1\) is a root of \(f_a(X)\) for any \(a\in\mathbb F_q\).
For $t\in\mathbb F_q\setminus\{0,1\}$, the equation $f_a(t)=0$ is equivalent to
\[
a=\frac{t^3+t+1}{t^2+t}.
\]
Indeed, since $t^2+t\neq0$ for $t\notin\{0,1\}$, the above expression is well-defined, and solving $f_a(t)=0$ for $a$ yields the displayed formula. Hence, the set of all $a\in\mathbb F_q$ for which $f_a(X)$ has a root in $\mathbb F_q$ is contained in the image of the map
\[
\phi:\mathbb F_q\setminus\{0,1\}\to\mathbb F_q,\qquad
\phi(t)=\frac{t^3+t+1}{t^2+t}.
\]
The domain of $\phi$ has cardinality $q-2$, so its image has at most $q-2$ elements. Therefore, there exists $a\in\mathbb F_q$ outside the image of $\phi$. For such an $a$, the polynomial $f_a(X)$ has no root in $\mathbb F_q$. Since a cubic polynomial over a field is reducible over that field if and only if it has a linear factor, we conclude that $f_a(X)$ is irreducible over $\mathbb F_q$.
\end{proof}

\begin{theorem}\label{thm:conj59}
Let $n=2^{2s}+2^s+1$, where $s\ge2$.
Then $\C_{(2,n,3,1)}$ has parameters $[n,n-3s,3]$.
In particular, Conjecture 5.3 of \cite{Chen59} holds.
\end{theorem}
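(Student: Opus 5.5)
The plan has two parts: the dimension $n-3s$ and the minimum distance $3$.

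\emph{Dimension.} Here $n=q^2+q+1$ with $q=2^s$. Since $n\mid 2^{3s}-1$ and $2^j-1<n$ for $j<3s$, we have $m=\operatorname{ord}_n(2)=3s$. The defining set of $\mathcal C_{(2,n,3,1)}$ is $C_1\cup C_2=C_1$, so the dimension is $n-|C_1|$, and it remains to show $|C_1|=3s$. This follows from Lemma~\ref{ewnd}: we have $2^{\lfloor 3s/2\rfloor}<n\le 2^{3s}-1$, and $1\le n2^{\lfloor 3s/2\rfloor}/(2^{3s}-1)$, so every coset in the range $s'=1$ has size $m$. Directly, $2^j\equiv1\pmod n$ forces $n\mid 2^j-1$, hence $j\ge 3s$.

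\emph{Minimum distance.} The BCH bound (Lemma~\ref{lem:BCHbound}) gives $d\ge3$. A codeword of weight $3$ must be of the form $X^{a}+X^{b}+X^{c}$ with $\beta^a+\beta^b+\beta^c=0$. After dividing by one term, by Lemma~\ref{lem:rootcriterion} it suffices to find distinct $y,z\in H\setminus\{1\}$ with $1+y+z=0$, where $H$ is the norm-one subgroup of $\F_{q^3}^*$. That is, we need some $y\in H$ with $y\ne1$ and $1+y\in H$; then $y\neq 1+y$ automatically.

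\emph{Construction.} Take $a$ from Lemma~\ref{lem:irreduciblecubic} and let $y$ be a root of $f_a(X)=X^3+aX^2+(a+1)X+1$. Since $f_a$ is irreducible over $\F_q$, we have $y\in\F_{q^3}\setminus\F_q$, and its conjugates $y,y^q,y^{q^2}$ are exactly the roots of $f_a$.
\begin{itemize}
\item The norm of $y$ is the product of the roots, which equals the constant term $1$. Hence $y\in H$, and $y\neq1$ because $y\notin\F_q$.
\item The element $1+y$ is a root of $f_a(X+1)$. Its norm equals the product of the $1+y^{q^i}$, which is $f_a(1)$ in characteristic $2$. We computed $f_a(1)=1$ in the proof of Lemma~\ref{lem:irreduciblecubic}, so $N(1+y)=1$ and $1+y\in H$.
\end{itemize}
Writing $y=\beta^{b}$ and $1+y=\beta^{c}$, the polynomial $1+X^{b}+X^{c}$ has weight $3$ and vanishes at $\beta$. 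Therefore $d\le3$, and combined with the BCH bound, $d=3$.

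The main obstacle is choosing $f_a$ so that both constant-term conditions hold, $f_a(0)=1$ and $f_a(1)=1$, while $f_a$ stays irreducible. Lemma~\ref{lem:irreduciblecubic} already supplies this, so the remaining checks are the routine norm computations above and the cyclotomic-coset size.
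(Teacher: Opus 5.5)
Your minimum-distance argument is the paper's argument. You take a root $y$ of the irreducible cubic $f_a$ from Lemma~\ref{lem:irreduciblecubic} and read off $N(y)=f_a(0)=1$ and $N(1+y)=f_a(1)=1$. Then $1,y,1+y$ are three distinct elements of $H$ summing to zero, which gives the weight-$3$ codeword, and the BCH bound supplies $d\ge 3$.

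The one step that fails as written is your justification of $m=\operatorname{ord}_n(2)=3s$ in the dimension part.

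\begin{itemize}
\item The claim ``$2^j-1<n$ for $j<3s$'' is false for every $s\ge 2$. For $2s+1\le j<3s$ one has $2^j-1\ge 2^{2s+1}-1>2^{2s}+2^s+1=n$, since the difference is $(2^s-2)(2^s+1)>0$. For example, $s=2$, $n=21$, $j=5$ gives $31>21$.
\item Likewise, ``$n\mid 2^j-1$, hence $j\ge 3s$'' is a non sequitur. Divisibility only gives $2^j-1\ge n$, that is, $j\ge 2s+1$.
\end{itemize}

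The missing ingredient is that $m$ divides $3s$, because $n\mid 2^{3s}-1$.
\begin{itemize}
\item If $m<3s$, then $m$ is a proper divisor of $3s$, so $m\le 3s/2$.
\item Then $0<2^m-1<2^{2s}<n$, which contradicts $n\mid 2^m-1$.
\end{itemize}

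Once $m=3s$ is established, $|C_1|=3s$ is immediate, because $|C_1|$ is by definition the order of $2$ modulo $n$. So the appeal to Lemma~\ref{ewnd} is not needed. The paper's own one-line citation of Lemma~\ref{ewnd} also takes $m=3s$ for granted; your write-up should supply this divisibility argument instead of the incorrect inequality.
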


\begin{proof}
The dimension of $\mathcal{C}_{(2,n,3,1)}$ can be obtained from Lemma \ref{ewnd}. We next determine its minimum distance of $\mathcal{C}_{(2,n,3,1)}$.

Set \(q=2^s\). By Lemma~\ref{lem:irreduciblecubic}, there exists \(a\in\mathbb F_q\) such that the cubic
\[
f_a(X)=X^3+aX^2+(a+1)X+1
\]
is irreducible over \(\mathbb F_q\). Let \(\xi\in\mathbb F_{q^3}\) be a root of \(f_a(X)\). Since \(f_a(X)\) is irreducible of degree \(3\), the conjugates of \(\xi\) over \(\mathbb F_q\) are
$\xi$, $\xi^q$ and $\xi^{q^2}.$
The constant term of \(f_a(X)\) is \(1\), so the norm of \(\xi\) satisfies
\[
N_{\mathbb F_{q^3}/\mathbb F_q}(\xi)
=
\xi^{1+q+q^2}
=
1.
\]
Furthermore, a direct computation gives \(f_a(1)=1\). Hence, in characteristic \(2\),
\[
N_{\mathbb F_{q^3}/\mathbb F_q}(\xi+1)
=
\prod_{i=0}^{2}(1+\xi^{q^i})
=
f_a(1)
=
1.
\]
Since \(n=q^2+q+1\), it follows that
\[
\xi^n=1
\quad\text{and}\quad
(\xi+1)^n=1.
\]

Let \(\beta\) be a primitive \(n\)-th root of unity in \(\mathbb F_{q^3}\), and let \(H=\langle \beta\rangle\). Then \(1,\xi,\xi+1\in H\). Moreover, \(\xi\notin\mathbb F_q\) because \(f_a\) is irreducible, and \(f_a(1)=1\) implies \(\xi\neq1\); also \(\xi\neq0\). Thus the three elements \(1,\xi,\xi+1\) are pairwise distinct. Therefore, there exist distinct integers \(i,j\) modulo \(n\) such that
\[
\xi=\beta^i
\quad\text{and}\quad
\xi+1=\beta^j
\]
with \(i,j\neq0\).
Define
$
c(X)=1+X^i+X^j.
$
Then \(c\) has Hamming weight \(3\), and
\[
c(\beta)=1+\beta^i+\beta^j
=
1+\xi+(\xi+1)
=
0.
\]
By Lemma~\ref{lem:rootcriterion}, we have
$
c(X)\in \mathcal C_{(2,n,3,1)}.
$
Hence,
$
d(\mathcal C_{(2,n,3,1)})\le 3.
$

On the other hand, the BCH bound gives
$
d(\mathcal C_{(2,n,3,1)})\ge 3.
$
Therefore equality holds.
\end{proof}

\begin{remark}
In \cite{Chen59}, the equality
$
d\bigl(\mathcal{C}_{(2,n,3,1)}\bigr)=3
$
was established for $s$ being even, while the case $s$ being odd  was left open in Conjecture~5.3 of \cite{Chen59}. In contrast, Theorem \ref{thm:conj59} provides a unified argument that covers all possible cases of $s$, and hence completely confirms Conjecture~5.3 in \cite{Chen59}.
\end{remark}

\begin{example}
Let \( s = 2 \). Then \(\mathcal{ C}_{(2,n,3,1)} \) has parameters \([21,15,3]\), while the optimal binary cyclic code has parameters \([21,15,4]\) \cite[Appendix A]{DingP1}.
\end{example}

\begin{example}
Let \( s = 3 \). Then \( \mathcal{C}_{(2,n,3,1)} \) has parameters \([73,64,3]\) and is an optimal binary cyclic code \cite[Appendix A]{DingP1}.
\end{example}

The preceding conjecture concerns the code \(\mathcal{C}_{(2,n,3,1)}\), for which determining the minimum distance reduces to constructing a codeword of weight \(3\) satisfying the single zero condition \(c(\beta)=0\). The situation for \(\mathcal{C}_{(2,n,5,1)}\) is more involved. In the following, we consider the minimum distance of \(\mathcal{C}_{(2,n,5,1)}\). The following lemma provides the required construction.

\begin{lemma}\label{lem:V1I}
Let \(s\) be a positive integer such that
\[
s\equiv 2 \pmod 6,
\quad\text{or}\quad
s\equiv 4 \pmod 6,
\quad\text{or}\quad 5\mid s
\,\,\text{and}\,\,
15\nmid s
\]
and set
$
n=2^{2s}+2^s+1.
$
Then there exist five pairwise distinct elements
$
x,y,z,u,v\in \mathbb F_{2^{3s}}
$
such that
\begin{eqnarray*}
\begin{cases}
x+y+z+u+v=0,  \\
x^3+y^3+z^3+u^3+v^3=0,\\
x^n=y^n=z^n=u^n=v^n=1.
\end{cases}
\end{eqnarray*}
\end{lemma}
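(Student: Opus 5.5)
The plan follows the proof of Lemma~\ref{lem:1}. I will find a small divisor $d$ of $n$ for which the order-$d$ subgroup of $\mathbb F_{2^{3s}}^*$ already contains five distinct elements satisfying the two power-sum equations. Since $n\mid 2^{3s}-1$, the group $H=\{a\in\mathbb F_{2^{3s}}^*:a^n=1\}$ is cyclic of order $n$. Hence $d\mid n$ guarantees an element $\theta\in H$ of order $d$, and all powers of $\theta$ automatically satisfy $x^n=1$.

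\textbf{Step 1: small divisors of $n$.} Write $n=4^s+2^s+1$.
\begin{itemize}
\item If $3\nmid s$, then $2^s\equiv 2$ or $4\pmod 7$, so $n\equiv 4+2+1$ or $2+4+1\equiv 0\pmod 7$.
\item If $s$ is even, then $2^s\equiv1\pmod 3$, so $3\mid n$.
\end{itemize}
Hence for $s\equiv 2,4\pmod 6$ we get $21\mid n$.

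Now suppose $5\mid s$ and $15\nmid s$, and write $s=5k$ with $3\nmid k$. Since $\operatorname{ord}_{151}(2)=15$, the element $2^{5}$ has order $3$ modulo $151$. Because $3\nmid k$, $2^{s}$ is a primitive cube root of unity modulo $151$, and therefore $n\equiv 1+\omega+\omega^2\equiv 0\pmod{151}$. Together with $7\mid n$ this gives $1057=7\cdot151\mid n$.

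\textbf{Step 2: why order $7$ alone is not enough.} Order $7$ by itself cannot work. The seven $7$th roots of unity sum to $0$, so any five of them sum to the sum of the two omitted ones, which is a sum of two distinct nonzero elements and hence nonzero. So we must use order $21$ in the first case and order $151$ or $1057$ in the second.

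\textbf{Step 3: explicit certificates.} In each case I fix a concrete irreducible $h(X)\in\mathbb F_2[X]$ whose roots have the required order. For $d=21$ this is a degree-$6$ factor of $\Phi_{21}$. For $d=151$ it is a degree-$15$ factor of $\Phi_{151}$; if needed, I use a factor of $\Phi_{1057}$, whose degree is the lcm of $3$ and $15$, namely $15$.

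The order of $\theta$ is verified exactly as in Lemma~\ref{lem:1}: check that $h\mid X^{d}-1$ and that $\gcd(h,X^{d/p}-1)=1$ for each prime $p\mid d$.

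Then I exhibit exponents $0=e_1,e_2,\dots,e_5$, distinct modulo $d$, with
\[
h(X)\mid \textstyle\sum_i X^{e_i}\quad\text{and}\quad h(X)\mid \sum_i X^{3e_i \bmod d}.
\]
Finally I set $x=1,\;y=\theta^{e_2},\dots$. This yields the two power-sum equations, pairwise distinctness, and $x^n=\cdots=v^n=1$.

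\textbf{Main obstacle.} The crux is whether such weight-$5$ exponent sets exist at all. Equivalently, the binary cyclic code of length $d$ with zeros $\theta,\theta^3$ must have a weight-$5$ word.

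For $d=21$ this is a finite search over $\binom{20}{4}$ subsets containing $0$. I would carry it out by computer, and expect it to succeed, since such codes typically have small weights.

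For length $151$ the $2$-cyclotomic cosets of $1$ and $3$ have size $15$. The code then has dimension $121$, and by the sphere-packing heuristic weight-$5$ words should exist; I would search first there. If that search fails, I would fall back to length $1057$, whose larger dimension makes weight-$5$ words even more plausible. Only the resulting explicit divisibility identities need to be recorded, mirroring the verification style of Case~2 in Lemma~\ref{lem:1}.
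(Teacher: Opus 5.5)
\textbf{Verdict: same strategy as the paper, but there is a gap — the explicit certificates are never produced.} Your reduction matches the paper's. You show $21\mid n$ when $s\equiv 2,4\pmod 6$, and $1057=7\cdot 151\mid n$ when $5\mid s$, $15\nmid s$. Your route through $\operatorname{ord}_{151}(2)=15$ is a valid alternative to the paper's $2^{15}\equiv 1\pmod{1057}$. You then certify a weight-$5$ configuration inside the order-$d$ subgroup by polynomial divisibilities, as the paper does.

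The problem is that you never exhibit the certificates. The lemma is a pure existence statement, and the existence of these five-element configurations is its decisive content. Saying you would search by computer and expect success does not prove it.

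Your justification for trying length $151$ first is also unsound. The sphere-packing bound only caps how large the minimum distance can be; it never forces low-weight words to exist. A random-code count gives about $\binom{151}{5}/2^{30}\approx 0.57$ expected weight-$5$ words. In a cyclic code of prime length $151$, weight-$5$ words come in shift orbits of size exactly $151$. So there is no reason to expect success at $151$. At length $1057$ the same count is roughly $10^4$, which is why the paper goes directly to order $1057$.

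\textbf{What closes the gap} is the explicit data the paper records.

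For $d=21$, take $h(X)=X^6+X^4+X^2+X+1$, which is irreducible with roots of order $21$, and the exponents $0,1,6,8,18$. Then $h(X)\mid 1+X+X^6+X^8+X^{18}$. The tripled exponents reduce modulo $21$ to $0,3,18,3,12$, so the two $\theta^3$ terms cancel in characteristic $2$. It then suffices to check $h(X)\mid 1+X^{12}+X^{18}$.

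For $d=1057$, take $m(X)=X^{15}+X^{10}+X^9+X^8+X^4+X^3+X^2+X+1$. It satisfies $X^{33}\equiv X+1\pmod{m(X)}$. Hence $X^{1057}=X\,(X^{33})^{32}\equiv X(X^{32}+1)=X^{33}+X\equiv 1$. The exponents $0,72,105,119,234$ satisfy
$m(X)\mid 1+X^{72}+X^{105}+X^{119}+X^{234}$ and
$m(X)\mid 1+X^{216}+X^{315}+X^{357}+X^{702}$.
The relation $\theta^{33}=\theta+1$ makes both divisibilities checkable by hand.

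You also need irreducibility of $m(X)$ and the exact order $1057$ to get pairwise distinctness. Your planned tests ($h\mid X^d-1$ and $\gcd(h,X^{d/p}-1)=1$) supply the order. Irreducibility still has to be shown separately, as the paper does.

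With this data supplied, your Steps 1 and 3 go through as written. Your Step 2 observation, that order $7$ alone cannot work, is correct but not needed.
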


\begin{proof}
We prove this result from the following two cases.

\noindent
\textbf{Case 1:} $s\equiv 2 \pmod 6$,
or
$s\equiv 4 \pmod 6$.  We first show that \(21\mid n\).
Since \(s\) is even, we have
$
2^s\equiv 1 \pmod 3,
$
and hence
$
2^{2s}\equiv 1 \pmod 3.
$
Therefore,
\[
n=2^{2s}+2^s+1\equiv 1+1+1\equiv 0 \pmod 3,
\]
which means that \(3\mid n\).
Now consider the residue of \(2^s\) modulo \(7\). The multiplicative order of \(2\) modulo \(7\) is \(3\). Since \(s\equiv 2\) or \(4\pmod 6\), we have \(3\nmid s\), and thus \(2^s\not\equiv 1\pmod 7\). Moreover, \((2^s)^3\equiv 1\pmod 7\), so \(2^s\) is a nontrivial cube root of unity modulo \(7\). It follows that
\[
(2^s)^2+2^s+1\equiv 0 \pmod 7.
\]
Hence, \(7\mid n\). It follows that
$
21\mid n.
$

Consider the polynomial
\begin{equation}\label{eqhX}
h(X)=X^6+X^4+X^2+X+1\in \mathbb F_2[X].
\end{equation}
Since \(\deg {h(X)}=6\), if \(h(X)\) were reducible over \(\mathbb F_2\), then it would have an irreducible factor of degree at most \(3\). We therefore exclude irreducible factors of degrees \(1\), \(2\), and \(3\).

First,
$h(0)=1$ and $h(1)=1,$
so \(h(X)\) has no linear factor over \(\mathbb F_2\).
The unique monic irreducible quadratic polynomial over \(\mathbb F_2\) is
$$
f_2(X)=X^2+X+1.
$$
It is clear that
$
X^2\equiv X+1 \pmod {f_2(X)}
$ and $
X^3\equiv1\pmod {f_2(X)}.
$
Then
$
X^4\equiv X \pmod {f_2(X)}
$ and
$X^6\equiv1 \pmod {f_2(X)}.
$
Hence,
$$
h(X)
\equiv
1+X+(X+1)+X+1
=
X+1
\not\equiv0
\pmod{f_2(X)}.
$$
Therefore, \(f_2(X)\nmid h(X)\).
There are exactly two monic irreducible cubic polynomials over \(\mathbb F_2\):
$$
f_{3,1}(X)=X^3+X+1\,\, \text{and}\,\,
f_{3,2}(X)=X^3+X^2+1.
$$
It ia obvious that
$
X^3\equiv X+1 \pmod {f_{3,1}(X)}.
$
Thus,
$
X^4\equiv X^2+X \pmod {f_{3,1}(X)},
$ and $
X^6=(X^3)^2\equiv X^2+1 \pmod {f_{3,1}(X)}.
$
It follows that
$$
\begin{aligned}
h(X)
&\equiv
(X^2+1)+(X^2+X)+X^2+X+1\\
&=X^2
\not\equiv0
\pmod{f_{3,1}(X)}.
\end{aligned}
$$
Hence, \(f_{3,1}(X)\nmid h(X)\).

Similarly, we have
$
X^3\equiv X^2+1 \pmod {f_{3,2}(X)}.
$
Then
$
X^4\equiv X^2+X+1 \pmod {f_{3,2}(X)}
$
and
$
X^6\equiv X^2+X \pmod {f_{3,2}(X)}.
$
Hence,
$$
\begin{aligned}
h(X)
&\equiv
(X^2+X)+(X^2+X+1)+X^2+X+1\\
&=X^2+X
\not\equiv0
\pmod{f_{3,2}(X)}.
\end{aligned}
$$
Thus, \(f_{3,2}(X)\nmid h(X)\). Hence, \(h(X)\) has no irreducible factor of degree \(1\), \(2\), or \(3\). Since \(\deg h(X)=6\), every nontrivial factorization of \(h(X)\) would contain a factor of degree at most \(3\). Therefore, (\ref{eqhX})
is irreducible over \(\mathbb F_2\).

Now let \(\theta\) be a root of \(h(X)\). Since
$
h(\theta)=\theta^6+\theta^4+\theta^2+\theta+1=0,
$
we have
\begin{equation}\label{eq:0831}
\theta^6=\theta^4+\theta^2+\theta+1.
\end{equation}
Multiplying both sides by \(\theta\), we obtain
$
\theta^7=\theta^5+\theta^3+\theta^2+\theta.
$
Squaring this identity in characteristic \(2\) gives
\begin{equation}\label{eq:083101}
\theta^{14}
=
\theta^{10}+\theta^6+\theta^4+\theta^2.
\end{equation}
Combining (\ref{eq:0831}) and (\ref{eq:083101}),
we further obtain
$
\theta^8
=
\theta^3+\theta+1,
$
and hence
$
\theta^{10}
=
\theta^5+\theta^3+\theta^2.
$
Substituting these expressions into the formula for \(\theta^{14}\), we get
$$
\begin{aligned}
\theta^{14}
&=
\bigl(\theta^5+\theta^3+\theta^2\bigr)
+\bigl(\theta^4+\theta^2+\theta+1\bigr)
+\theta^4+\theta^2=
\theta^5+\theta^3+\theta^2+\theta+1=
\theta^7+1.
\end{aligned}
$$
Then
$$
\theta^{21}
=
\theta^{14}\theta^7
=
(\theta^7+1)\theta^7
=
\theta^{14}+\theta^7
=
(\theta^7+1)+\theta^7
=
1.
$$
Thus, the multiplicative order of \(\theta\) divides \(21\).

Since \(h(X)\) is irreducible over \(\mathbb F_2\) and has degree \(6\), the minimal polynomial of \(\theta\) over \(\mathbb F_2\) has degree \(6\). Hence, the multiplicative order of \(\theta\) cannot be \(1\), \(3\), or \(7\). Indeed, if \(\theta^3=1\), then \(\theta\) would lie in \(\mathbb F_{2^2}\), so its minimal polynomial over \(\mathbb F_2\) would have degree at most \(2\). Similarly, if \(\theta^7=1\), then \(\theta\) would lie in \(\mathbb F_{2^3}\), so its minimal polynomial would have degree at most \(3\). Both cases contradict the fact that the minimal polynomial of \(\theta\) has degree \(6\).
Therefore, the multiplicative order of \(\theta\) is exactly
$
21.
$
Since every root of the irreducible polynomial \(h(X)\) is a Frobenius conjugate of \(\theta\), all roots of \(h(X)\) have the same multiplicative order. Hence every root of \(h(X)\) has multiplicative order \(21\).

Set
\[
x=1,\quad
y=\theta,\quad
z=\theta^6,\quad
u=\theta^8,\quad
v=\theta^{18}.
\]
The exponents \(0,1,6,8,18\) are pairwise distinct modulo \(21\), so the five elements \(x,y,z,u,v\) are pairwise distinct. Since \(21\mid n\), we have \[
x^n=y^n=z^n=u^n=v^n=1.
\]
It remains to verify the two additive conditions. A direct computation in \(\mathbb F_2[X]\) gives
\[
h(X)\mid 1+X+X^6+X^8+X^{18}.
\]
Evaluating at \(X=\theta\), we obtain
$
1+\theta+\theta^6+\theta^8+\theta^{18}=0,
$
that is,
\[
x+y+z+u+v=0.
\]
Similarly,
\[
\begin{aligned}
x^3+y^3+z^3+u^3+v^3
=1+\theta^3+\theta^{18}+\theta^{24}+\theta^{54}=1+\theta^3+\theta^{18}+\theta^3+\theta^{12}=1+\theta^{12}+\theta^{18},
\end{aligned}
\]
where we used \(\theta^{21}=1\) and the fact that the underlying field has characteristic \(2\). Another direct computation shows that
$
h(X)\mid 1+X^{12}+X^{18}.
$
Then
$
1+\theta^{12}+\theta^{18}=0,
$
and hence
\[
x^3+y^3+z^3+u^3+v^3=0.
\]

\noindent
\textbf{Case 2:} $5\mid s$ and $15\nmid s$. Write
$
s=5k.
$
Since \(15\nmid s\), we have \(3\nmid k\). Put
$
M=2^{10}+2^5+1=1057.
$
Since
\[
2^{15}-1=(2^5-1)(2^{10}+2^5+1)=31M,
\]
we have
$
M\mid 2^{15}-1.
$
We first show that \(M\mid n\). Since \(3\nmid k\), we have
\[
k\equiv 1 \pmod 3
\quad\text{or}\quad
k\equiv 2 \pmod 3.
\]
If \(k\equiv 1\pmod 3\), then
$
2^{5k}\equiv 2^5\pmod M
$
and
$
2^{10k}\equiv 2^{10}\pmod M.
$
Hence,
\[
n=2^{10k}+2^{5k}+1
\equiv 2^{10}+2^5+1
\equiv 0\pmod M.
\]
If \(k\equiv 2\pmod 3\), then
$
2^{5k}\equiv 2^{10}\pmod M
$
and
$
2^{10k}\equiv 2^{20}\equiv 2^5\pmod M.
$
Thus again
\[
n\equiv 2^5+2^{10}+1\equiv 0\pmod M.
\]
Therefore,
$
M\mid n.
$

Since \(M\mid 2^{15}-1\), there exists an element
$
\theta\in \mathbb F_{2^{15}}^{*}
$
of multiplicative order \(M=1057\). As \(5\mid s\), we have
$
15\mid 3s,
$
and hence
$
\mathbb F_{2^{15}}\subseteq \mathbb F_{2^{3s}}.
$
Thus \(\theta\in\mathbb F_{2^{3s}}\).
Let
\[
m(X)
=
X^{15}+X^{10}+X^9+X^8+X^4+X^3+X^2+X+1
\in\mathbb F_2[X].
\]
We prove that \(m(X)\) is irreducible over \(\mathbb F_2\).
Assume \(m(X)=0\). We will prove that \(X\) has order \(1057\) in \(\mathbb F_{2^{15}}^\times\). Consequently,
\[
X^{15}=X^{10}+X^9+X^8+X^4+X^3+X^2+X+1.
\]
Hence,
\[
X^{16}=X^{11}+X^{10}+X^9+X^5+X^4+X^3+X^2+X.
\]
Over \(\mathbb F_2\), it is easy to check that
\[
X^{32}=X^{22}+X^{20}+X^{18}+X^{10}+X^8+X^6+X^4+X^2.
\]
Successive reduction modulo \(m(X)\) gives
\[
X^{32}\equiv X^{14}+X^9+X^8+X^7+X^3+X^2+X\pmod {m(X)}.
\]
Thus,
\[
X^{33}\equiv X^{15}+X^{10}+X^9+X^8+X^4+X^3+X^2\pmod {m(X)}.
\]
Substituting the reduction of \(X^{15}\) and cancelling equal terms in characteristic \(2\),
\[
X^{33}\equiv X+1 \pmod {m(X)}.
\]
Since \(1056=32\cdot 33\), we have
\[
X^{1056}=(X^{33})^{32}\equiv (X+1)^{32}=X^{32}+1\pmod {m(X)}.
\]
Multiplying by \(X\),
\[
X^{1057}\equiv X^{33}+X\equiv (X+1)+X=1\pmod {m(X)}.
\]
Therefore,
$
m(X)\mid X^{1057}-1.
$

Since
$
2^{15}-1=32767=31\cdot 1057,
$
we have \(1057\mid 2^{15}-1\), then \(X^{1057}-1\mid X^{2^{15}-1}-1\). Hence,
\[
m(X)\mid X^{2^{15}}-X.
\]
The polynomial \(X^{2^{15}}-X\) is squarefree over \(\mathbb F_2\), and its irreducible factors are precisely the monic irreducibles over \(\mathbb F_2\) whose degrees divide \(15\). Therefore, \(m(X)\) is squarefree and every irreducible factor of \(m(X)\) has degree \(d\mid 15\). The positive divisors of \(15\) are \(1,3,5,15\); it remains to exclude \(1,3,5\).
Since
\[
m(0)=1\neq 0\,\, \text{and}\,\, m(1)=9\equiv 1\not\equiv 0\pmod 2,
\]
\(m(X)\) has no linear factor over \(\mathbb F_2\).
A cubic irreducible factor would divide \(X^8-X\). It is easy to check that
\[
m(X)\equiv X^4+X+1\pmod{X^8-X},
\]
and the Euclidean algorithm gives
\[
\gcd(X^8-X,X^4+X+1)=1.
\]
Thus \(m(X)\) has no cubic irreducible factor.
A quintic irreducible factor would divide \(X^{32}-X\). Let
\[
R(X)=X^{14}+X^9+X^8+X^7+X^3+X^2
\]
be the remainder of \(X^{32}+X\) modulo \(m(X)\). Since \(X^{32}-X=X^{32}+X\) in characteristic \(2\),
\[
\gcd(m(X),X^{32}-X)=\gcd(m(X),R(X)).
\]
Moreover,
$
m(X)+XR(X)=X^2+X+1,
$
then
\[
\gcd(m(X),R(X))=\gcd(R(X),X^2+X+1).
\]
Substitution gives \(R(X)\equiv 1\pmod{X^2+X+1}\), we have
$
\gcd(m(X),X^{32}-X)=1.
$
Thus, \(m(X)\) has no quintic irreducible factor.
Consequently, every irreducible factor of \(m(X)\) has degree \(15\). Since \(\deg m(X)=15\), \(m(X)\) is irreducible over \(\mathbb F_2\), and
\[
\mathbb F_2[X]/(m(X))\cong \mathbb F_{2^{15}}.
\]
It remains to determine the order of \(X\) in \(\mathbb F_{2^{15}}^{*}\). Since \(1057=7\cdot 151\) with \(7\) and \(151\) prime, and since \(X^{1057}\equiv 1\pmod {m(X)}\), the order of \(X\) is \(1057\) if and only if
\[
X^7\not\equiv 1\pmod {m(X)}\,\,\,\, \text{and}\,\,\,\, X^{151}\not\equiv 1\pmod {m(X)}.
\]
The first is immediate: \(X^7-1\) has degree \(7\), so it cannot be divisible by the degree-\(15\) polynomial \(m(X)\).
For the second, repeated squaring yields
\[
X^{64}\equiv 1+X^3+X^6+X^9+X^{13}+X^{14}\pmod {m(X)},
\]
\[
X^{128}\equiv 1+X+X^4+X^7+X^{11}+X^{12}\pmod {m(X)},
\]
and, using
\[
X^{23}\equiv X+X^3+X^4+X^5+X^7+X^8+X^{10}+X^{12}+X^{13}\pmod {m(X)},
\]
one obtains
\[
X^{151}=X^{128}X^{23}\equiv X^{14}+X^{10}+X^6+X^3+X \pmod {m(X)}.
\]
This is not the constant polynomial \(1\), so \(X^{151}\not\equiv 1\pmod {m(X)}\). Therefore the order of \(X\) in \(\mathbb F_{2^{15}}^\times\) is exactly \(1057\).
Thus
we may choose \(\theta\) such that \(m(\theta)=0\) and
\(\operatorname{ord}(\theta)=1057\). A direct computation in
\(\mathbb F_2[X]\) gives
$$
m(X)\mid
\left(
1+X^{72}+X^{105}+X^{119}+X^{234}
\right)
\,\,\text{and}\,\,
m(X)\mid
\left(
1+X^{216}+X^{315}+X^{357}+X^{702}
\right).
$$
Therefore,
$$
1+\theta^{72}+\theta^{105}+\theta^{119}+\theta^{234}=0
\,\,\text{and}\,\,
1+\theta^{216}+\theta^{315}+\theta^{357}+\theta^{702}=0.
$$

Now set
\[
x=1,\qquad
y=\theta^{72},\qquad
z=\theta^{105},\qquad
u=\theta^{119},\qquad
v=\theta^{234}.
\]
We have
\[
x+y+z+u+v=0
\,\,\text{and}\,\,
x^3+y^3+z^3+u^3+v^3=0.
\]

Finally, since each of \(x,y,z,u,v\) has multiplicative order
dividing \(M\), and \(M\mid n\), we obtain
\[
x^n=y^n=z^n=u^n=v^n=1.
\]

From the above two cases, we obtain the desired result.
\end{proof}

With an analysis similar to that of Theorem \ref{Theoremc1}, we have the following result.

\begin{theorem}
 Let \(s\ge 3\) be an integer satisfying
\[
s\equiv 2\pmod 6,
\quad\text{or}\quad
s\equiv 4\pmod 6,
\quad\text{or}\quad
5\mid s\ \text{and}\ 15\nmid s.
\]
Set
\(
n=2^{2s}+2^s+1.
\)
Then the binary BCH code \(\mathcal C_{(2,n,5,1)}\) has parameters
\[
[n,\; n-6s,\; 5].
\]
\end{theorem}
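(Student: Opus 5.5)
The plan has three steps: compute the dimension from the cyclotomic cosets, get the lower bound from the BCH bound, and get the matching upper bound from Lemma~\ref{lem:V1I}.

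\textbf{Dimension.} Put $q=2^s$ and $n=q^2+q+1$, so that $m=\ord_n(2)=3s$. Indeed, $2^{3s}-1=(2^s-1)n\equiv 0\pmod n$. Conversely, if $2^t\equiv1\pmod n$ with $0<t<3s$, then $2^t-1\ge n>2^{2s}$ forces $t>2s$. Writing $2^t=2^{2s}\cdot2^{t-2s}$, the congruence $2^{2s}\equiv -2^s-1 \pmod n$ together with a size comparison excludes all such $t$.

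Since $2^{\lfloor 3s/2\rfloor}<n\le 2^{3s}-1$, Lemma~\ref{ewnd} applies. It gives $|C_1|=|C_3|=3s$, and both $1$ and $3$ are coset leaders, provided $3\le n2^{\lfloor 3s/2\rfloor}/(2^{3s}-1)=2^{\lfloor 3s/2\rfloor}/(2^s-1)$. This holds for $s\ge3$, since $2^{\lfloor 3s/2\rfloor}\ge 2^{s+1}\cdot 2^{\lfloor s/2\rfloor-1}$ and $\lfloor s/2\rfloor\ge1$; at worst one checks small $s$ directly. Hence $C_1\neq C_3$.

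The defining set of $\mathcal C_{(2,n,5,1)}$ is $C_1\cup C_2\cup C_3\cup C_4=C_1\cup C_3$, because $C_2=C_4=C_1$. It therefore has size $6s$, and the dimension is $n-6s$.

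\textbf{Lower bound.} The defining set contains $1,2,3,4$, so Lemma~\ref{lem:BCHbound} gives $d\ge5$.

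\textbf{Upper bound.} Under the stated congruence conditions on $s$, Lemma~\ref{lem:V1I} supplies five pairwise distinct $x,y,z,u,v\in\F_{2^{3s}}$ that are $n$-th roots of unity, with $x+y+z+u+v=0$ and $x^3+y^3+z^3+u^3+v^3=0$. Fix a primitive $n$-th root of unity $\beta\in\F_{2^{3s}}$. Every $n$-th root of unity is a power of $\beta$, so we may write the five elements as $\beta^{a_1},\dots,\beta^{a_5}$ with $a_1,\dots,a_5$ distinct modulo $n$. Set $c(X)=\sum_{i=1}^5 X^{a_i}$; it has weight $5$, and $c(\beta)=c(\beta^3)=0$. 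By Lemma~\ref{lem:rootcriterion}, $c(X)\in\mathcal C_{(2,n,5,1)}$, so $d\le5$, and hence $d=5$.

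The codeword construction is already done in Lemma~\ref{lem:V1I}, so the only delicate point is the dimension count. Specifically, one must verify that $\ord_n(2)=3s$ and that $3$ lies in the range covered by Lemma~\ref{lem:V1I}'s companion, Lemma~\ref{ewnd}, i.e.\ that $C_3$ is a full-size coset distinct from $C_1$.
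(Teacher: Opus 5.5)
Your proposal is correct and follows the same route as the paper, which proves this theorem ``similarly to Theorem~\ref{Theoremc1}'': the dimension from Lemma~\ref{ewnd} (you additionally spell out $\ord_n(2)=3s$ and why $C_1\neq C_3$ are full-size, which the paper leaves implicit), $d\ge5$ from the BCH bound, and a weight-$5$ codeword built from Lemma~\ref{lem:V1I} for $d\le5$. One small slip: your claim that $3\le 2^{\lfloor 3s/2\rfloor}/(2^s-1)$ holds for $s\ge3$ is false at $s=3$, where the ratio is $16/7$. This does not affect the proof, because $s=3$ is excluded by the hypotheses, and for $s\ge4$ one has $2^{\lfloor 3s/2\rfloor}/(2^s-1)>2^{\lfloor s/2\rfloor}\ge4$.
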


\begin{example}
Let \( s = 4 \). Then \( \mathcal{C}_{(2,n,5,1)} \) has parameters \([273,249,5]\).
\end{example}

\begin{remark}
It should be pointed out that
$
d\bigl(\mathcal{C}_{(2,\,n,\,5,\,1)}\bigr)
$
is not equal to \(5\) for all positive integers \(s\). For example, by Magma,
$d\bigl(\mathcal{C}_{(2,73,5,1)}\bigr)=6$ if \(s=3\) and
$d\bigl(\mathcal{C}_{(2,7,5,1)}\bigr)=7$ if $s=1$.
\end{remark}

\subsection{The family $n=(4^s-1)/3$}

In this subsection, we settle Conjecture 6.9 in \cite{Chen59}. We first show the following lemma.

\begin{lemma}\label{lm:Con3}
Let $
n=\frac{4^s-1}{3}
$, where $s\ge 2$ is an integer.
Then there exist five pairwise distinct elements
$
x,y,z,u,v\in \mathbb{F}_{2^{4s}}
$
such that
\begin{eqnarray}\label{eq:oct1027}
\begin{cases}
x+y+z+u+v=0,  \\
x^3+y^3+z^3+u^3+v^3=0,\\
x^n=y^n=z^n=u^n=v^n=1.
\end{cases}
\end{eqnarray}
\end{lemma}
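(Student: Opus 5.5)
The plan is to reduce to a weight-five configuration inside a subgroup of $\mu_n=\{a\in\mathbb F_{2^{4s}}^*: a^n=1\}$. First note that $n=(4^s-1)/3$ divides $4^s-1=2^{2s}-1$, which divides $2^{4s}-1$. So $\mu_n$ is the unique subgroup of order $n$ of $\mathbb F_{2^{4s}}^*$, and it contains every element whose order divides $n$. The system (\ref{eq:oct1027}) only involves the first and third power sums. It therefore suffices to find five distinct elements of some group of order $d$ with $d\mid n$ satisfying the two power-sum conditions, exactly as in Lemma~\ref{lem:1} and Lemma~\ref{lem:V1I}. I would split into three cases according to the arithmetic of $s$.

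\textbf{Case A: $s$ even.} Here $15=4^2-1$ divides $4^s-1$. Since $\gcd(5,3)=1$, this gives $5\mid n$. Take $\omega\in\mu_n$ of order $5$ and set $x,y,z,u,v=1,\omega,\omega^2,\omega^3,\omega^4$. The argument of Case~1 of Lemma~\ref{lem:1} applies verbatim: the first power sum is $\Phi_5(\omega)=0$, and multiplication by $3$ permutes $\mathbb Z_5^*$, so the cubes give the same sum.

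\textbf{Case B: $3\mid s$.} Then $63=4^3-1$ divides $4^s-1$, hence $21\mid n$. I reuse the element $\theta$ of order $21$ constructed in Case~1 of Lemma~\ref{lem:V1I}, namely a root of $h(X)=X^6+X^4+X^2+X+1$. The choice $x,y,z,u,v=1,\theta,\theta^6,\theta^8,\theta^{18}$ was already shown there to satisfy both power-sum conditions, using only $\theta^{21}=1$. Since $21\mid n$, these elements are $n$-th roots of unity. This covers in particular $s=3$, where $n=21$ itself.

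\textbf{Case C: $s$ odd and $3\nmid s$.} Here $s\ge5$, since $s\ge2$ and $s\ne3$. Because $s$ is odd, $2^s-1\equiv 1\pmod 3$, so $\gcd(2^s-1,3)=1$. Since $2^s-1\mid 4^s-1$, this yields $2^s-1\mid n$, and hence $\mathbb F_{2^s}^*\subseteq\mu_n$. Now apply Lemma~\ref{eq:5pri} with $m=s\ge5\ge4$: the code $\mathcal C_{(2,2^s-1,5,1)}$ has minimum distance $5$. A weight-$5$ codeword $X^{a_1}+\cdots+X^{a_5}$ evaluated at a primitive element $\gamma$ of $\mathbb F_{2^s}$ gives five distinct elements $\gamma^{a_i}$. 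Their first and third power sums vanish because $c(\gamma)=c(\gamma^3)=0$; note that $3$ lies in the defining set, since $\delta=5$ gives $1,2,3,4$. These elements lie in $\mathbb F_{2^s}^*\subseteq\mu_n\subseteq\mathbb F_{2^{4s}}$. This is the lifting argument from primitive BCH codes.

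The three cases are exhaustive. The only genuinely delicate point is the case $s=3$, i.e.\ $n=21$. There no primitive subfield code works: $m=3$ gives the $[7,1,7]$ code. This is exactly why Case~B borrows the explicit order-$21$ element from Lemma~\ref{lem:V1I}. The remaining steps are the divisibility checks $5\mid n$, $21\mid n$ and $2^s-1\mid n$, which are elementary.
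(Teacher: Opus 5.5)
Your proof is correct and follows essentially the same route as the paper. It uses an order-$5$ element for even $s$, the order-$21$ root $\theta$ of $X^6+X^4+X^2+X+1$ with exponents $0,1,6,8,18$, and the lift of a weight-$5$ codeword of $\mathcal{C}_{(2,2^s-1,5,1)}$ via $2^s-1\mid n$ for odd $s$. The only difference is how the cases are split: you use the order-$21$ construction for every $s$ with $3\mid s$ (via $63\mid 4^s-1$) and lift only when $s$ is odd with $3\nmid s$. The paper uses $\theta$ only for $s=3$ and lifts for every odd $s\ge 5$; both partitions are exhaustive and valid.
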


\begin{proof}
It is obvious that
$
n\mid 2^{2s}-1.
$
Then every $n$-th root of unity lies in $\mathbb F_{2^{2s}}^*$, which is a subgroup of $\mathbb F_{2^{4s}}^*$.
We distinguish three cases according to the value of $s$.

\medskip

\noindent
\textbf{Case 1:} $s$ is even.
Since $s$ is even, we have
$
4^s\equiv 1\pmod{15}.
$
It follows that
$
5\mid \frac{4^s-1}{3}=n.
$
Therefore, there exists an element $\omega\in\mathbb{F}_{2^{2s}}^*$
of order $5$. Set
\[
x=1,\qquad
y=\omega,\qquad
z=\omega^2,\qquad
u=\omega^3,\qquad
v=\omega^4.
\]
Clearly, $x,y,z,u,v$ are pairwise distinct. Since $5\mid n$, we have
\[
x^n=y^n=z^n=u^n=v^n=1.
\]
Moreover,
$
1+\omega+\omega^2+\omega^3+\omega^4=0,
$
and hence,
\[
x+y+z+u+v=0.
\]
Since $\gcd(3,5)=1$, multiplication by $3$ induces a permutation
of the residue classes modulo $5$. Therefore,
\[
\begin{aligned}
x^3+y^3+z^3+u^3+v^3
=1+\omega^3+\omega^6+\omega^9+\omega^{12}=1+\omega+\omega^2+\omega^3+\omega^4=0.
\end{aligned}
\]

\medskip

\noindent
\textbf{Case 2:}  $s=3$.
In this case,
$
n=\frac{4^3-1}{3}=21.
$
Let $\theta\in\mathbb F_{2^6}$ be a root of the polynomial
\[
h(X)=X^6+X^4+X^2+X+1\in \mathbb F_2[X].
\]
We now prove that the polynomial
$
h(X)
$
is irreducible over $\mathbb F_2$, and that each of its roots has multiplicative order $21$.
First, since
\[
h(0)=1\neq 0
\,\,\text{and}\,\,
h(1)=1\neq 0,
\]
the polynomial $
h(X)
$ has no root in $\mathbb F_2$, and hence no linear factor.
Clearly, the only irreducible quadratic polynomial over $\mathbb F_2$ is
$
f_2(X)=X^2+X+1.
$
Then we have
$
X^2\equiv X+1 \pmod {f_2(X)},
$
Hence,
\[
X^3\equiv X(X+1)\equiv X^2+X\equiv (X+1)+X\equiv 1 \pmod {f_2(X)}.
\]
Consequently,
\[
X^6\equiv (X^3)^2\equiv 1 \pmod {f_2(X)}
\,\,\,\,\text{and}\,\,\,\,
X^4= X^3\cdot X\equiv X \pmod {f_2(X)}.
\]
Substituting these congruences into $h(X)$, we obtain
\[
h(X)\equiv 1+X+(X+1)+X+1\equiv X+1\not\equiv 0\pmod{f_2(X)}.
\]
Thus, \(f_2(X)\) does not divide \(h(X)\).

It is known that
$
f_{3,1}(X)=X^3+X+1
\,\,\text{and}\,\,
f_{3,2}(X)=X^3+X^2+1
$
are irreducible cubic polynomials over \(\mathbb F_2\).
From \(f_{3,2}(X)\), we have $X^4\equiv X^2+X \pmod{f_{3,1}(X)}$,
\[
X^5\equiv X^2+X+1 \pmod{f_{3,1}(X)}\,\, \text{and}\,\,X^6\equiv X^2+1 \pmod{f_{3,1}(X)}.\]
Substitution into \(h(X)\) gives
\[
h(X)\equiv (X^2+1)+(X^2+X)+X^2+X+1\equiv X^2\not\equiv 0\pmod{f_{3,1}(X)}.
\]
Similarly, from \(f_{3,2}(X)\), we have $X^4\equiv X^2+X+1\pmod{f_{3,2}(X)},$
\[
X^5\equiv X+1\pmod{f_{3,2}(X)}\,\,\text{and}\,\,
X^6\equiv X^2+X\pmod{f_{3,2}(X)}.
\]
Then
\[
h(X)\equiv (X^2+X)+(X^2+X+1)+X^2+X+1\equiv X^2+X\not\equiv 0\pmod{f_{3,2}(X)}.
\]
Hence, \(h(X)\) has no irreducible factor of degree at most \(3\). Since \(\deg h(X)=6\), any nontrivial factorization of \(h(X)\) would necessarily contain a factor of degree at most \(3\). Therefore, \(h(X)\) is irreducible over \(\mathbb F_2\).

Now let \(\theta\) be a root of \(h(X)\). From \(h(\theta)=0\) we have
$
\theta^6=\theta^4+\theta^2+\theta+1.
$
Multiplying by \(\theta\) gives
$
\theta^7=\theta^5+\theta^3+\theta^2+\theta.
$
Squaring this identity in characteristic \(2\), we get
\[
\theta^{14}=(\theta^7)^2
=\theta^{10}+\theta^6+\theta^4+\theta^2.
\]
Next, using the expression for \(\theta^6\), we compute
\[
\theta^8=\theta\cdot\theta^7
=\theta^6+\theta^4+\theta^3+\theta^2
=\theta^3+\theta+1,
\]
and then
$
\theta^9=\theta^4+\theta^2+\theta$ and
$
\theta^{10}=\theta^5+\theta^3+\theta^2.
$
Substituting these into the expression for \(\theta^{14}\), we obtain
\[
\theta^{14}
=(\theta^5+\theta^3+\theta^2)+(\theta^4+\theta^2+\theta+1)+\theta^4+\theta^2
=\theta^5+\theta^3+\theta^2+\theta+1
=\theta^7+1.
\]
Then
\[
\theta^{21}
=\theta^{14}\theta^7
=(\theta^7+1)\theta^7
=\theta^{14}+\theta^7
=(\theta^7+1)+\theta^7
=1.
\]
Thus, the order of \(\theta\) divides \(21\).

Since \(\theta\neq 1\), the order is not \(1\). If \(\theta^3=1\), then \(\theta\) would be a primitive third root of unity, whose minimal polynomial over \(\mathbb F_2\) has degree \(2\), contradicting the fact that \(h(X)\) is an irreducible polynomial of degree \(6\) with \(\theta\) as a root.

 Similarly, if \(\theta^7=1\), then \(\theta\) would be a primitive seventh root of unity, whose minimal polynomial over \(\mathbb F_2\) has degree \(3\), again contradicting \(\deg h(X)=6\). Hence, the order of \(\theta\) divides \(21\) but is not a proper divisor of \(21\). Therefore, its multiplicative order is exactly \(21\).
Choose
\[
x=1,\qquad
y=\theta,\qquad
z=\theta^6,\qquad
u=\theta^8,\qquad
v=\theta^{18}.
\]
The exponents
$
0,1,6,8,18
$
are pairwise distinct modulo $21$, and hence
$x,y,z,u,v$ are pairwise distinct. Since $\theta^{21}=1$, we have
\[
x^{21}=y^{21}=z^{21}=u^{21}=v^{21}=1.
\]

A direct computation in $\mathbb{F}_2[X]$ gives
\[
h(X)\mid
\left(1+X+X^6+X^8+X^{18}\right).
\]
Evaluating at $X=\theta$, we obtain
$
1+\theta+\theta^6+\theta^8+\theta^{18}=0,
$
and then
$
x+y+z+u+v=0.
$
Furthermore,
\[
\begin{aligned}
x^3+y^3+z^3+u^3+v^3
&=1+\theta^3+\theta^{18}+\theta^{24}+\theta^{54}\\
&=1+\theta^3+\theta^{18}+\theta^3+\theta^{12}\\
&=1+\theta^{12}+\theta^{18},
\end{aligned}
\]
using \(\theta^{21}=1\) and the fact that the field has characteristic \(2\). Note that
$$
 X^{18}+X^{12}+1 = \left(X^6+X^4+X^2+X+1\right)\cdot\left(X^{12}+X^{10}+X^7+X^6+X^3+X+1\right).
$$
Since $\theta^6+\theta^4+\theta^2+\theta+1=0$,  we obtain
$
1+\theta^{12}+\theta^{18}=0.
$
Thus,
\[
x^3+y^3+z^3+u^3+v^3=0.
\]

\medskip

\noindent
\textbf{Case 3: } $s\ge5$ is odd.
Since $s$ is odd, we have
$
2^s\equiv -1\pmod 3.
$
Then
$
3\mid 2^s+1,
$
which means that
$n$
is divisible by $2^s-1$. Thus,
$
\mathbb{F}_{2^s}^*
\subseteq
\{a\in\mathbb{F}_{2^{2s}}^*:a^n=1\}.
$

Let $\alpha$ be a primitive element of $\mathbb{F}_{2^s}$. From Lemma \ref{eq:5pri}, we know that \(\mathcal C_{(2,2^s-1,5,1)}\) is a binary primitive narrow-sense BCH code
with zeros
$\alpha,\alpha^2,\alpha^3,\alpha^4,$
whose minimum
Hamming distance is $5$. Hence, it contains a codeword of Hamming
weight $5$. Let the support of such a codeword be
\[
\{i_1,i_2,i_3,i_4,i_5\}\subset \mathbb Z_{2^s-1},
\]
where the \(i_j\) are pairwise distinct modulo \(2^s-1\). From the definition of \(\mathcal C_{(2,2^s-1,5,1)}\),  we have
\begin{equation}\label{eq2ss}
\alpha^{i_1}+\alpha^{i_2}+\alpha^{i_3}+\alpha^{i_4}+\alpha^{i_5}=0\,\, \text{and}\,\,\alpha^{3i_1}+\alpha^{3i_2}+\alpha^{3i_3}+\alpha^{3i_4}+\alpha^{3i_5}=0.
\end{equation}

Set
$x=\alpha^{i_1}$, $y=\alpha^{i_2}$, $z=\alpha^{i_3}$, $u=\alpha^{i_4}$ and $v=\alpha^{i_5}$.
The elements \(x,y,z,u,v\) are pairwise distinct because the exponents are distinct modulo \(2^s-1\). Then from (\ref{eq2ss}) we obtain
$$
x+y+z+u+v=0\,\, \text{and}\,\,
x^3+y^3+z^3+u^3+v^3=0.
$$
Finally, each of these elements belongs to \(\mathbb F_{2^s}^*\), whose order divides \(N\). Therefore
\[
x^n=y^n=z^n=u^n=v^n=1.
\]

In all cases we have constructed five pairwise distinct elements satisfying the required conditions. This completes the proof.
\end{proof}

With an argument similar to that used in Theorem \ref{Theoremc1}, we have the following result.

\begin{theorem}\label{Theoremc3}
Let $n=\frac{4^s-1}{3}.$ Then $\mathcal{C}_{(2,n,5,1)}$ has parameters $[n,n-4s,5]$ for $s\geq4$.
\end{theorem}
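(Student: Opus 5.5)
The proof has two parts: the dimension $n-4s$ and the minimum distance $5$. For the dimension, I would first record that $m=\operatorname{ord}_n(2)=2s$. Indeed $n\mid 4^s-1=2^{2s}-1$. Moreover, for any $j<2s$ we have $2^j-1<2^{2s-1}\le n$ once $s\ge 2$, and $2^j-1\neq 0$, so $n\nmid 2^j-1$.

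Next I would check the hypothesis $2^{\lfloor m/2\rfloor}=2^s<n\le 2^{2s}-1$ of Lemma~\ref{ewnd}. The inequality $2^s<(4^s-1)/3$ holds for $s\ge 2$. The lemma then says that every $i$ with $1\le i\le n2^s/(2^{2s}-1)=2^s/3$ has $|C_i|=2s$, and that every odd $i$ in this range is a coset leader. For $s\ge 4$ we have $3\le 2^s/3$, so $C_1$ and $C_3$ are distinct cosets, each of size $2s$.

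Since $C_2=C_4=C_1$, the defining set of $\mathcal{C}_{(2,n,5,1)}$ is $C_1\cup C_3$. It has cardinality $4s$, so by the relation $\dim=n-|T|$ we get $\dim=n-4s$. This is where $s\ge 4$ enters. For $s=2$ or $3$ the bound $2^s/3<3$ fails; indeed for $s=2$, $n=5$ and $C_3=C_1$.

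For the minimum distance, the BCH bound (Lemma~\ref{lem:BCHbound}) gives $d\ge 5$, since $1,2,3,4$ lie in the defining set. For the matching upper bound, I would proceed exactly as in the proof of Theorem~\ref{Theoremc1}. Lemma~\ref{lm:Con3} supplies five pairwise distinct $n$-th roots of unity $x,y,z,u,v$ with vanishing first and third power sums. As $n\mid 2^{2s}-1$, all of them lie in the cyclic group $\langle\beta\rangle$ for a primitive $n$-th root of unity $\beta$. Writing them as $\beta^{a_1},\dots,\beta^{a_5}$ with distinct exponents modulo $n$, the polynomial $c(X)=\sum_{k=1}^5 X^{a_k}$ has weight $5$ and satisfies $c(\beta)=c(\beta^3)=0$. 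Lemma~\ref{lem:rootcriterion} then gives $c(X)\in\mathcal{C}_{(2,n,5,1)}$, so $d\le 5$.

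Lemma~\ref{lm:Con3} covers every $s\ge 4$: even $s$ falls under Case 1, and odd $s\ge 5$ under Case 3. So I expect no real obstacle in this theorem, since the hard constructive work is already done in that lemma. The only delicate point is the cyclotomic coset bookkeeping that ensures $C_3\ne C_1$ and both have full size $2s$, which is exactly what dictates the hypothesis $s\ge4$.
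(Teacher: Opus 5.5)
Your overall argument is the same as the paper's. The dimension comes from Lemma~\ref{ewnd}: for $m=2s$ the admissible range is $1\le i\le 2^s/3$, which contains $3$ exactly when $s\ge 4$, so $C_1\ne C_3$ and both have size $2s$. The distance comes from the BCH bound together with the weight-$5$ codeword obtained from Lemma~\ref{lm:Con3}, exactly as in Theorem~\ref{Theoremc1}. Your coset bookkeeping for why $s\ge 4$ is needed is correct, and it is more explicit than the paper, which only says the argument is similar to Theorem~\ref{Theoremc1}.

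One step, however, is false as written. You claim $2^{2s-1}\le n$, but in fact $n=(2^{2s}-1)/3<2^{2s}/2=2^{2s-1}$ for every $s$; for example, $s=4$ gives $n=85<128$. So for $j=2s-1$ the number $2^j-1$ exceeds $n$, and your size argument says nothing about whether $n$ divides it.

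The conclusion $\operatorname{ord}_n(2)=2s$ is still true and easy to repair. Since $n\mid 2^{2s}-1$, the order divides $2s$. So you only need to rule out proper divisors $j$ of $2s$, all of which satisfy $j\le s$. For these, $2^j-1\le 2^s-1<n=(2^s-1)(2^s+1)/3$ when $s\ge 2$. Equivalently, $2^j-1<4^{s-1}\le n$ for $j\le 2s-2$, and $2s-1\nmid 2s$. With this fix your proof is complete.
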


\begin{remark}
In \cite{Chen59}, the equality
$
d\bigl(\mathcal{C}_{(2,n,5,1)}\bigr)=5
$
was established for even $s$, while the odd case remained open in Conjecture 6.9. When $s=2$ and $s=3$, it is easy to check that $\mathcal{C}_{(2,n,5,1)}$ has parameters $[5,1,5]$ and $[21,12,5]$, respectively.  Hence, Theorem~\ref{Theoremc3} provides a unified argument that covers all possible cases of $s$, and completely confirms Conjecture 6.9 in \cite{Chen59}.
\end{remark}

\begin{example}
Let \( s = 4 \). Then \( \C_{(2,n,5,1)} \) has parameters \([85,69,5]\), which is the optimal binary cyclic code \cite[Appendix A]{DingP1}.
\end{example}

\section{Open Problem on Binary BCH Codes
}

In \cite{Chen59}, the authors posed several open problems concerning three families of binary BCH codes. Open Problem 8.4 asks whether there exist integers \(\delta\) and \(b\) such that the BCH code \(\mathcal C_{(2,n,\delta,b)}\) simultaneously satisfies
\[
\dim (\mathcal C_{(2,n,\delta,b)})\ge \frac{n-1}{2}
\quad\text{and}\quad
d(\mathcal C_{(2,n,\delta,b)})\ge \frac{\sqrt n}{2},
\]
where $n=(2^{2s}+1)(2^s-1)$, or $n=2^{2s}+2^s+1$, or $n=\frac{2^s-1}{\lambda}$ for $\lambda>1$ being a constant divisor of $2^s-1$.
In this section, our main goal is to solve this open problem. We always choose $b=0$ and first give a general construction.

\begin{proposition}\label{pros}
Let $n$ be a positive integer and $m=\ord_n(2)$. Then
 $\dim(\mathcal C_{(2,n,\delta,0)})\ge\frac{n-1}{2}$ if $m\delta\le(n-1)$.
\end{proposition}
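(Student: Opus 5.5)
The plan is to bound the size of the defining set of $\mathcal C_{(2,n,\delta,0)}$ from above and then use $\dim(\mathcal C)=n-|T(\mathcal C)|$ from Section~II. By definition, the generator polynomial is
\[
g_{(\delta,0)}(x)=\operatorname{lcm}\bigl(m_0(x),m_1(x),\ldots,m_{\delta-2}(x)\bigr).
\]
Hence the defining set with respect to $\beta$ is exactly the union of cyclotomic cosets
\[
T=C_0\cup C_1\cup\cdots\cup C_{\delta-2}.
\]

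The key steps, in order, are as follows.
\begin{enumerate}
\item Note that $C_0=\{0\}$, so $|C_0|=1$.
\item For each $1\le i\le \delta-2$, use the bound $|C_i|\le m=\ord_n(2)$ recalled in Section~II.
\item Apply subadditivity of cardinality over the union to get
\[
|T|\le 1+(\delta-2)m .
\]
\end{enumerate}
This already gives $\dim(\mathcal C_{(2,n,\delta,0)})\ge n-1-(\delta-2)m$.

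It then remains to compare this bound with $(n-1)/2$. We need $n-1-(\delta-2)m\ge (n-1)/2$, which is equivalent to $2(\delta-2)m\le n-1$. The hypothesis $m\delta\le n-1$ gives this with room to spare: since $m\ge 1$, we have $2(\delta-2)m=2m\delta-4m$, and the hypothesis alone does not immediately bound $2m\delta$. So a finer count is needed. I would first try to use the fact that in the binary case $C_{2i}=C_i$. Among $1,\dots,\delta-2$, every even index lies in the coset of a smaller index, so only the odd indices $i\le\delta-2$ contribute new cosets. There are at most $\lceil(\delta-2)/2\rceil\le \delta/2$ of them. This improves the count to
\[
|T|\le 1+\frac{\delta}{2}\,m .
\]
Consequently,
\[
\dim(\mathcal C_{(2,n,\delta,0)})\ge n-1-\frac{m\delta}{2}\ge n-1-\frac{n-1}{2}=\frac{n-1}{2},
\]
where the last inequality uses $m\delta\le n-1$.

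The main obstacle is precisely this counting step: the naive bound $(\delta-2)m$ is too weak, and one must use $C_{2i}=C_i$ to halve the number of distinct cosets. The detail to handle is the count of odd integers in $[1,\delta-2]$, which is $\lfloor(\delta-1)/2\rfloor\le \delta/2$ for all $\delta\ge2$; this settles both parities of $\delta$. The degenerate case $\delta=2$, where $T=\{0\}$, is trivially covered by the same bound.
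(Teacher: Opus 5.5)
Your proposal is correct and follows essentially the same route as the paper: use $C_{2i}=C_i$ to reduce $T$ to $C_0$ together with the cosets $C_u$ for odd $u\le\delta-2$. This gives $|T|\le 1+m\lfloor(\delta-1)/2\rfloor\le 1+m\delta/2$, and the hypothesis $m\delta\le n-1$ then yields $\dim\ge (n-1)/2$.
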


\begin{proof}
Let $T$ be the defining set of $\mathcal C_{(2,n,\delta,0)}$. By the
definition of the BCH code,
\[
T=\bigcup_{j=0}^{\delta-2}C_j,
\]
where $C_j$ denotes the binary cyclotomic coset modulo $n$
containing $j$.

For every integer $j$ with $1\leq j\leq\delta-2$, write
$
j=2^a u,
$
where $a\geq0$ and $u$ is odd. Since $1\leq u\leq j\leq
\delta-2$ and $C_{2i}=C_i$, we have
$
C_j=C_u.
$
Consequently,
\[
T=C_0\cup
\bigcup_{\substack{1\leq u\leq\delta-2\\u\ \mathrm{odd}}}C_u.
\]
The number of odd integers in the interval
$[1,\delta-2]$ is exactly
$
\left\lfloor\frac{\delta-1}{2}\right\rfloor.
$
Moreover, $|C_0|=1$, and every nonzero binary cyclotomic
coset modulo $n$ has cardinality at most $m$. Therefore,
\[
|T|\leq
1+m\left\lfloor\frac{\delta-1}{2}\right\rfloor.
\]
Since the dimension of a cyclic code equals its length
minus the cardinality of its defining set, it follows that
\[
\dim (\mathcal C_{(2,n,\delta,0)})
=n-|T|
\geq n-1-m\left\lfloor\frac{\delta-1}{2}\right\rfloor.
\]

If $m\delta\leq n-1$, then
\[
\begin{aligned}
\dim (\mathcal C_{(2,n,\delta,0)})
\geq n-1-m\left\lfloor\frac{\delta-1}{2}\right\rfloor>n-1-\frac{m\delta}{2}\geq n-1-\frac{n-1}{2}=\frac{n-1}{2}.
\end{aligned}
\]
This completes the proof.
\end{proof}

\begin{theorem}\label{eq:thmee}
Let $n=(2^{2s}+1)(2^s-1)$, or $n=2^{2s}+2^s+1$. Let $\delta=\left\lceil\frac{\sqrt n}{2}\right\rceil$,
then \[
\dim (\mathcal C_{(2,n,\delta,0)})\ge \frac{n-1}{2}
\,\,\text{and}\,\,
d(\mathcal C_{(2,n,\delta,0)})\ge \frac{\sqrt n}{2}.
\]
\end{theorem}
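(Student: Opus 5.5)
The plan is to split the statement into a distance part, handled by the BCH bound, and a dimension part, handled by Proposition~\ref{pros}.

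For the distance, the code $\mathcal C_{(2,n,\delta,0)}$ has defining set containing the consecutive integers $0,1,\ldots,\delta-2$. Lemma~\ref{lem:BCHbound} then gives $d\ge\delta=\lceil\sqrt n/2\rceil\ge\sqrt n/2$. The only thing to check is that $2\le\delta\le n$, so that the code is well defined; this is clear since $n\ge 21$ for $s\ge2$.

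For the dimension, by Proposition~\ref{pros} it suffices to show $m\delta\le n-1$, where $m=\ord_n(2)$.

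First I bound $m$ in each family.
\begin{itemize}
\item For $n=(2^{2s}+1)(2^s-1)$, we have $n\mid 2^{4s}-1$, so $m\le 4s$.
\item For $n=2^{2s}+2^s+1$, we have $n\mid 2^{3s}-1$, so $m\le 3s$.
\end{itemize}
In both cases $m\le 4s$. Since $\delta\le\sqrt n/2+1$, it is enough to prove
\[
4s\left(\frac{\sqrt n}{2}+1\right)\le n-1 .
\]

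Next I compare sizes. In both families $n\ge 2^{2s}$, hence $\sqrt n\ge 2^s$. Writing $t=\sqrt n$, the target inequality becomes
\[
t^2-2st-4s-1\ge 0,
\]
which holds once $t\ge 2s+3$, since then $t^2-2st=t(t-2s)\ge 3t\ge 6s+9$. So it suffices to have $2^s\ge 2s+3$, which is true for all $s\ge 3$.

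The main obstacle is the small case $s=2$, and I will check it directly.
\begin{itemize}
\item For $n=51$: $\delta=\lceil\sqrt{51}/2\rceil=4$ and $m=\ord_{51}(2)=8$, so $m\delta=32\le 50$.
\item For $n=21$: $\delta=3$ and $m=\ord_{21}(2)=6$, so $m\delta=18\le 20$.
\end{itemize}
Both satisfy $m\delta\le n-1$, so Proposition~\ref{pros} applies in these cases too.

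Combining the two parts gives the theorem.
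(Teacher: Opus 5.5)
\textbf{Gap at $s=3$.} Your decomposition is the same as the paper's: the BCH bound gives the distance, and Proposition~\ref{pros} gives the dimension via $m\le 4s$ (or $3s$) and $\delta\le \sqrt n/2+1$, with small cases checked by hand. However, one step is false as stated. You claim $2^s\ge 2s+3$ for all $s\ge 3$, but $2^3=8<9=2\cdot 3+3$; the inequality only holds from $s=4$ on. Your only lower bound on $n$ is $\sqrt n\ge 2^s$, and you check only $s=2$ directly. So the case $s=3$, namely $n=455$ and $n=73$, is not covered by your argument.

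\textbf{How to repair it.} The theorem is still true at $s=3$, and there are two easy fixes.
\begin{enumerate}
\item Verify directly. For $n=73$: $\delta=5$ and $m=\ord_{73}(2)=9$, so $m\delta=45\le 72$. For $n=455=5\cdot 7\cdot 13$: $\delta=11$ and $m=\mathrm{lcm}(4,3,12)=12$, so $m\delta=132\le 454$.
\item Observe that at $s=3$ your target inequality is $t^2-6t-13\ge 0$. Its positive root $3+\sqrt{22}\approx 7.69$ is below $8=2^3$, so $\sqrt n\ge 2^s$ already suffices.
\end{enumerate}

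\textbf{How the paper avoids it.} The paper uses family-specific estimates. For $n=(2^{2s}+1)(2^s-1)$ it uses $n>2^{3s-1}\ge (2s+3)^2$. For $n=2^{2s}+2^s+1$ it uses $m\le 3s$ together with $\sqrt n>2^s\ge \frac{3s}{2}+2$. Your uniform bounds $m\le 4s$ and $\sqrt n\ge 2^s$ are exactly what lets $s=3$ slip through.

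\textbf{The case $s=1$.} The paper also treats $s=1$, which the statement does not exclude. For $n=5$, Proposition~\ref{pros} does not apply, since $m\delta=8>4$. There one argues directly: $\delta=2$ gives defining set $T=\{0\}$, so the dimension is $n-1$. It is worth adding a line for this.
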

\begin{proof}
Let \(T\) be the defining set of \(\mathcal C_{(2,n,\delta,0)}\). By definition,
\[
\{0,1,\dots,\delta-2\}\subseteq T.
\]
Hence, by the BCH bound,
\[
d\bigl(\mathcal C_{(2,n,\delta,0)}\bigr)\ge \delta
=\left\lceil \frac{\sqrt n}{2}\right\rceil
\ge \frac{\sqrt n}{2}.
\]
It remains to prove the dimension bound.
Let \(m=\operatorname{ord}_n(2)\). By Proposition \ref{pros}, it suffices to verify that
$
m\delta\le n-1.
$
We distinguish the two possible forms of \(n\).

\noindent
\textbf{Case 1:} \(n=(2^{2s}+1)(2^s-1)\).
Since
\[
n=(2^{2s}+1)(2^s-1)\mid 2^{4s}-1,
\]
we have \(m\mid 4s\), then \(m\le 4s\).
For \(s=1\), we have \(n=5\) and \(\delta=2\). The defining set is \(T=\{0\}\), so
\[
\dim\bigl(\mathcal C_{(2,5,2,0)}\bigr)=5-1=4\ge \frac{5-1}{2}.
\]
Thus the assertion holds for \(s=1\).
For \(s=2\), we have \(n=51\), \(\delta=4\), and \(m=\operatorname{ord}_{51}(2)=8\). Therefore,
$
m\delta=8\cdot 4=32\le 50=n-1.
$
So Proposition \ref{pros} applies.
Now assume \(s\ge 3\). Using the elementary bound
$
\left\lceil \frac{\sqrt n}{2}\right\rceil\le \frac{\sqrt n}{2}+1,
$
it is enough to prove
\[
4s\left(\frac{\sqrt n}{2}+1\right)<n-1.
\]
Since \(n>2^{3s-1}\) and \(2^{3s-1}\ge (2s+3)^2\) for \(s\ge 3\), we obtain
$
\sqrt n>2s+3.
$
Consequently,
\[
\begin{aligned}
n-1-4s\left(\frac{\sqrt n}{2}+1\right)
&=n-1-2s\sqrt n-4s=\sqrt n(\sqrt n-2s)-4s-1\\
&>(2s+3)\cdot 3-4s-1=2s+8>0.
\end{aligned}
\]
Thus,
$
4s\left(\frac{\sqrt n}{2}+1\right)<n-1.
$
Therefore,
$$
m\delta\le 4s\delta
\le 4s\left(\frac{\sqrt n}{2}+1\right)
<n-1.
$$

\noindent
\textbf{Case 2:}  \(n=2^{2s}+2^s+1\).
Since
\[
n=2^{2s}+2^s+1\mid 2^{3s}-1,
\]
we have \(m\mid 3s\), hence \(m\le 3s\).
For \(s=1\), we have \(n=7\), \(\delta=2\), and \(m=\operatorname{ord}_7(2)=3\). Thus
\[
m\delta=3\cdot 2=6=n-1.
\]
For \(s=2\), we have \(n=21\), \(\delta=3\), and \(m=\operatorname{ord}_{21}(2)=6\). Hence
\[
m\delta=6\cdot 3=18\le 20=n-1.
\]
In both cases Proposition \ref{pros} applies.
Now assume \(s\ge 3\). Again using
$
\left\lceil \frac{\sqrt n}{2}\right\rceil\le \frac{\sqrt n}{2}+1,
$
it suffices to show
\[
3s\left(\frac{\sqrt n}{2}+1\right)<n-1.
\]
Since \(n>2^{2s}\), we have \(\sqrt n>2^s\). For \(s\ge 3\) and \(2^s\ge \frac{3s}{2}+2\), we obtain
$
\sqrt n>\frac{3s}{2}+2.
$
Then
\[
\begin{aligned}
n-1-3s\left(\frac{\sqrt n}{2}+1\right)
&=n-1-\frac{3s}{2}\sqrt n-3s=\sqrt n\left(\sqrt n-\frac{3s}{2}\right)-3s-1\\
&>\left(\frac{3s}{2}+2\right)\cdot 2-3s-1=3>0.
\end{aligned}
\]
Hence,
\[
3s\left(\frac{\sqrt n}{2}+1\right)<n-1.
\]
Therefore,
\[
m\delta\le 3s\delta
\le 3s\left(\frac{\sqrt n}{2}+1\right)
<n-1.
\]

In both cases we have \(m\delta\le n-1\). Applying Proposition \ref{pros} yields
\[
\dim\bigl(\mathcal C_{(2,n,\delta,0)}\bigr)\ge \frac{n-1}{2}.
\]
Together with the distance bound obtained from the BCH bound, the proof is complete.
\end{proof}

When $n=\frac{2^s-1}{\lambda}$, where $\lambda>1$ is a constant divisor of $2^s-1$. In some cases, there do not exist integers \(\delta\) and \(b\) such that the BCH code \(\mathcal C_{(2,n,\delta,b)}\) simultaneously satisfies
\[
\dim (\mathcal C_{(2,n,\delta,b)})\ge \frac{n-1}{2}
\quad\text{and}\quad
d(\mathcal C_{(2,n,\delta,b)})\ge \frac{\sqrt n}{2}.
\]
The following example gives a counterexample.

\begin{example} For the third family $n=\frac{2^s-1}{\lambda}$, the assertion in Open Problem 8.4 \cite{Chen59} does not hold for all admissible pairs $(s,\lambda)$. In particular, for $(s,\lambda)=(18,13797),$ we have $n=19$. By Magma, there exist no integers $\delta$ and $b$ such that $\dim\bigl(\mathcal{C}_{(2,19,\delta,b)}\bigr)\ge 9$ and $ d\bigl(\mathcal{C}_{(2,19,\delta,b)}\bigr)\ge \frac{\sqrt{19}}{2}. $ \end{example}

Although the above example shows that the assertion in Open
Problem~8.4 does not hold for all admissible pairs \((s,\lambda)\),
an affirmative answer can still be obtained under a suitable
condition on \(n\) and \(s\).

\begin{theorem}
Let $n\geq 5$, $4n\geq s^2$ and $n=\frac{2^s-1}{\lambda}$ for $\lambda>1$ being a constant divisor of $2^s-1$. Let $\delta=\lceil\frac{\sqrt n}{2}\rceil$,
then \[
\dim (\mathcal C_{(2,n,\delta,0)})\ge \frac{n-1}{2}
\quad\text{and}\quad
d(\mathcal C_{(2,n,\delta,0)})\ge \frac{\sqrt n}{2}.
\]
\end{theorem}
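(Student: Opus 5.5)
The plan is to take $b=0$ and $\delta=\lceil \sqrt n/2\rceil$, as in Theorem~\ref{eq:thmee}. The distance bound then comes directly from the BCH bound, and the dimension bound comes from the intermediate estimate in the proof of Proposition~\ref{pros}, not from its hypothesis $m\delta\le n-1$. That hypothesis is too crude here: $s$ can be as large as $2\sqrt n$, and then $m\delta$ may exceed $n-1$.

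First, the code is well defined. Since $n\ge 5$, we have $\sqrt n/2>1$, so $\delta\ge 2$; clearly also $\delta\le n$. The defining set $T$ contains $\{0,1,\dots,\delta-2\}$, so Lemma~\ref{lem:BCHbound} gives
\[
d(\mathcal C_{(2,n,\delta,0)})\ge \delta\ge \frac{\sqrt n}{2}.
\]

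Second, bound $m=\ord_n(2)$. Since $n\lambda=2^s-1$, we have $n\mid 2^s-1$, so $m\mid s$ and hence $m\le s$. The hypothesis $4n\ge s^2$ gives $s\le 2\sqrt n$, and therefore $m\le 2\sqrt n$.

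Third, reuse the count from the proof of Proposition~\ref{pros}:
\[
\dim(\mathcal C_{(2,n,\delta,0)})\ge n-1-m\left\lfloor \frac{\delta-1}{2}\right\rfloor .
\]
Since $\delta<\sqrt n/2+1$, we get $\left\lfloor (\delta-1)/2\right\rfloor\le (\delta-1)/2<\sqrt n/4$. Combining this with $m\le 2\sqrt n$ yields
\[
m\left\lfloor \frac{\delta-1}{2}\right\rfloor< 2\sqrt n\cdot\frac{\sqrt n}{4}=\frac n2 .
\]
The left-hand side is an integer strictly less than $n/2$, so it is at most $(n-1)/2$. Hence $\dim(\mathcal C_{(2,n,\delta,0)})\ge n-1-\frac{n-1}{2}=\frac{n-1}{2}$.

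The main obstacle is exactly this dimension step. The naive route of checking $m\delta\le n-1$ and then invoking Proposition~\ref{pros} fails at the boundary case $s\approx 2\sqrt n$. The fix is to keep the sharper factor $\lfloor(\delta-1)/2\rfloor$, which roughly halves $\delta$, together with the integrality argument that turns the strict inequality $<n/2$ into $\le (n-1)/2$. One detail to check is that $\delta=2$ causes no problem: then $\lfloor(\delta-1)/2\rfloor=0$, so $T=C_0$ and the bound $\dim\ge n-1$ holds trivially.
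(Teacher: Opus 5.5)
Your proof is correct and matches the paper's argument essentially step for step. Like the paper, you use the BCH bound for the distance and the coset count $|T|\le 1+m\lfloor(\delta-1)/2\rfloor$ from the proof of Proposition~\ref{pros} (not its hypothesis $m\delta\le n-1$), then combine $m\le s\le 2\sqrt n$ with the integrality step that turns a strict bound $<n/2$ into $\le (n-1)/2$.
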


\begin{proof}
Let $T$ be the defining set of $\C_{(2,n,\delta,0)}$ and $m=\ord_n(2).$ Since \(C_{2i}=C_i\), every nonzero cyclotomic coset appearing in
\(T\) can be represented by an odd integer in
\(\{1,2,\ldots,\delta-2\}\). Therefore,
\[
|T|
\le
1+m\left\lfloor\frac{\delta-1}{2}\right\rfloor.
\]
Consequently,
\[
\dim (\mathcal C_{(2,n,\delta,0)})
=n-|T|
\ge
n-1-m\left\lfloor\frac{\delta-1}{2}\right\rfloor.
\]

Since
$
4n\ge s^2,
$
we have
$
s\le2\sqrt n.
$
Moreover, since
$
\delta=\left\lceil\frac{\sqrt n}{2}\right\rceil,
$
we obtain
$
\delta-1<\frac{\sqrt n}{2},
$
and hence
$
\left\lfloor\frac{\delta-1}{2}\right\rfloor
<
\frac{\sqrt n}{4}.
$
Using \(m\le s\), we obtain
\[
\begin{aligned}
2m\left\lfloor\frac{\delta-1}{2}\right\rfloor
\le
2s\left\lfloor\frac{\delta-1}{2}\right\rfloor
<
\frac{s\sqrt n}{2}
\le n.
\end{aligned}
\]
Since
$
2m\left\lfloor\frac{\delta-1}{2}\right\rfloor
$
is an integer, it follows that
$
2m\left\lfloor\frac{\delta-1}{2}\right\rfloor
\le n-1.
$
Therefore,
\[
m\left\lfloor\frac{\delta-1}{2}\right\rfloor
\le\frac{n-1}{2},
\]
and consequently,
\[
\begin{aligned}
\dim (\mathcal C_{(2,n,\delta,0)})
\ge
n-1-
m\left\lfloor\frac{\delta-1}{2}\right\rfloor\ge
\frac{n-1}{2}.
\end{aligned}
\]

On the other hand, by the BCH bound,
$
d\bigl(\mathcal C_{(2,n,\delta,0)}\bigr)
\ge\delta
\ge\frac{\sqrt n}{2}.
$

This completes the proof.
\end{proof}

\section{Conclusion}

In this paper, we investigated several conjectures and an open problem concerning the parameters of three families of binary BCH codes. In addition to settling the proposed conjectures, we further studied a BCH code whose minimum distance problem is more involved than those appearing in the original conjectures.
The main contributions of this paper are as follows:

\begin{itemize}

\item We completely settled the three conjectures proposed in \cite{Chen59} on the exact minimum distances of three families of binary BCH codes. In each case, the BCH lower bound is shown to be attained by explicitly constructing a codeword of the corresponding minimum weight.

\item Beyond the original conjectures, we further investigated the more difficult code
\[
\mathcal{C}_{(2,2^{2s}+2^s+1,5,1)}.
\]
For several infinite classes of \(s\), we proved that its minimum distance is \(5\).

\item We studied Open Problem~8.4. Affirmative answers were obtained for the first two length families. For the third family
$
n=\frac{2^s-1}{\lambda},
$
we established a sufficient condition for the desired BCH codes to exist and provided a counterexample showing that the unrestricted statement is false in general.

\end{itemize}

These results settle all the conjectures considered in \cite{Chen59}, extend the study to a more difficult minimum-distance problem, and clarify the scope of Open Problem~8.4.

\end{document}